\documentclass[11p,reqno]{amsart}

\topmargin=0cm\textheight=22cm\textwidth=15cm
\oddsidemargin=0.5cm\evensidemargin=0.5cm
\setlength{\marginparwidth}{2cm}
\usepackage[T1]{fontenc}
\usepackage{graphicx}
\usepackage{amssymb,amsthm,amsmath,mathrsfs,bm,braket,marginnote}
\usepackage{enumerate}
\usepackage{appendix}
\usepackage[colorlinks=true, pdfstartview=FitV, linkcolor=blue, citecolor=blue, urlcolor=blue]{hyperref}
\usepackage{multirow}

\usepackage{pgf}
\usepackage{pgfplots}
\usepackage{tikz}
\usetikzlibrary{arrows,calc}
\usepackage{verbatim}
\usetikzlibrary{decorations.pathreplacing,decorations.pathmorphing}
\usepackage[numbers,sort&compress]{natbib}
\usepackage{dsfont}
\numberwithin{equation}{section}
\linespread{1.2}
\newtheorem{theorem}{Theorem}[section]

\newtheorem{proposition}[theorem]{Proposition}

\theoremstyle{remark}
\newtheorem{remark}[theorem]{Remark}

 \reversemarginpar

\begin{document}

\title[]{Circulant $L$-ensembles in the thermodynamic limit}

\author{Peter J. Forrester}
\address{School of Mathematical and Statistics, ARC Centre of Excellence for Mathematical and Statistical Frontiers, The University of Melbourne, Victoria 3010, Australia}
\email{pjforr@unimelb.edu.au}

\date{}

\dedicatory{}

\keywords{}

\begin{abstract}
$L$-ensembles are a class of determinantal point processes which can be
viewed as a statistical mechanical systems in the grand canonical ensemble.
Circulant $L$-ensembles are the subclass which are locally translationally
invariant and furthermore subject to periodic boundary conditions. Existing
theory can very simply be specialised to this setting, allowing for the
derivation of  formulas  
for the system pressure, and the correlation kernel, in the thermodynamic limit.
For a one-dimensional domain, this is possible when the circulant matrix is both
real symmetric, or complex Hermitian. The special case of the former having a Gaussian functional
form for the entries is shown to correspond to free fermions at finite temperature, and be
generalisable to higher dimensions. A special case of the latter is shown to be
the statistical mechanical model introduced by Gaudin to interpolate between Poisson and unitary
symmetry statistics in random matrix theory. It is shown in all cases that the compressibility
sum rule for the two-point correlation is obeyed,
and  the small and large distance asymptotics of the latter are considered. Also, a conjecture relating the
asymptotic form of the hole  probability to the pressure is verified.
\end{abstract}

\maketitle

\section{Introduction}

Point processes of the type familiar in statistical mechanics consist
of $N$ indistinguishable particles confined to a domain $\Omega$.
Let particle $j$ have coordinate $x_j$, and denote a configuration of the
$N$-particles by $\boldsymbol  X_N$. Each configuration of particles
in the domain is specified by a probability density function
$p^{(N, \Omega)}(\boldsymbol X_N)$.
Important in both applications and for illustrative purposes is the case that $\Omega$
is an interval $[0,L]$ divided into $M$ lattice points at $\tau j / M$,
$j=1,\dots,M$, and where $\tau:=L/M$. Then the probability density function takes on
a discrete set of values, and with the particle coordinates ordered
\begin{equation}\label{0.0a}
0 < x_1 \le x_2 \le \cdots \le x_N \le M,
\end{equation}
is a probability,
\begin{equation}\label{0.0}
p^{(N, \Omega)}(\boldsymbol  X_N) = {\rm Pr} \, (\boldsymbol  X_N).
\end{equation}
Parameterising $\Omega$ by $\boldsymbol  M = \{1,\dots,M \}$, and each $x_j$ by an
integer $n_j$ such that $x_j = \tau n_j/ M$, $\boldsymbol  X_N$ can be
regarded as a subset of $\boldsymbol  M$ of size $N$. 

Fundamental to the statistical description of point processes are the $k$-point correlation
functions $\{ \rho_{(k)} \}$. In the continuous setting, with no ordering of the particle coordinates
assumed, these are specified in terms of the probability density function by
\begin{equation}\label{0.2}
\rho_{(k)}^{(N, \Omega)}(\boldsymbol  X_k) = N (N - 1) \cdots (N - k + 1)
\int_\Omega dx_{k+1} \cdots \int_\Omega dx_{N} \, p^{(N, \Omega)}(\boldsymbol  X_N).
\end{equation}
The case $k=1$ gives the particle density, with the characterising property that
$\int_a^b \rho_{(1)}(x) \, dx$ is equal to the expected number of particles in the interval
$[a,b]$. More generally the ratio
\begin{equation}\label{0.3}
\rho_{(k)}^{(N, \Omega)}(\boldsymbol  X_k) / \rho_{(k-1)}^{(N, \Omega)}(\boldsymbol  X_{k-1})
\end{equation}
has the interpretation of corresponding to the particle density at $x_k$, given there are particles
at $x_1,\dots, x_{k-1}$.
In the discrete setting, with an ordering convention such as (\ref{0.0a}) and $\boldsymbol  X_k$
regarded as a subset,
an appropriate modification of (\ref{0.2}) is to define
\begin{equation}\label{0.4}
{\rho}_{(k)}^{(N, \boldsymbol  M)}(\boldsymbol  X_k) =  {\rm Pr} \, (\boldsymbol  X_k) =
\sum_{ \boldsymbol  Y_{N-k}: |\boldsymbol  X_k| + |\boldsymbol  Y_{N-k}| = N}
p^{(N, \boldsymbol  M)}(\boldsymbol  X_k \cup \boldsymbol Y_{N-k}).
\end{equation}
Note the absence of the combinatorial factor $ N (N - 1) \cdots (N - k + 1)$ which is present in
(\ref{0.2}). This is in keeping with subsets not recording order.

The theme of the present paper relates to the circumstance that the probability density function
$p^{(N, \Omega)}$ has the particular functional form
\begin{equation}\label{0.1a}
p^{(N, \Omega)}(\boldsymbol  X_N) = \det [ K^{(N, \Omega)}(x_j, x_l) ]_{j,l=1}^N
\end{equation}
for some function $ K^{(N, \Omega)}(x, y)$ referred to as the correlation kernel.
Note that for repeated elements in $\boldsymbol  X_N$ (or equivalently, particles at the
same site),  $p^{(N, \Omega)} = 0$.
Moreover, we want this 
structure to be maintained upon forming the $k$-point correlation (\ref{0.2}), so that
\begin{equation}\label{0.1b}
\rho_{(k)}^{(N, \Omega)}(\boldsymbol  X_k)  = \det [ K^{(N, \Omega)}(x_j, x_l) ]_{j,l=1}^k,
\end{equation}
where $ K^{(N, \Omega)}(x,y)$ is the same function as in (\ref{0.1a}).
Such point processes are termed determinantal. Although this class may seem very restrictive, 
it has been known for some time to include a number of prominent model systems in
mathematical physics. Examples include free fermion many body wave functions in quantum mechanics \cite{Hu40,Ma75}), 
unitary invariant random matrix ensembles \cite{Dy62a,Gi65}), statistical mechanics of log-potential
Coulomb systems at a special coupling \cite{AJ81,Ga66,Ga85},
statistical mechanics of certain two-dimensional lattice
models \cite{Ka67}, Dyson Brownian motion on certain matrix spaces \cite{Dy62b},
and non-intersecting random walkers \cite{Fi84}. As a non-exhaustive list of reviews and
extended accounts of these examples and more, we reference
\cite{LM66,Me91,Fo98a,BO00,Jo02,Bo06,HKPV08,TSZ08,Fo10,Ka16,DDMS19}.

Reviews and extended accounts of theoretical developments of determinantal
point processes, often incorporating as well many examples from mathematical
physics and other settings too, are similarly numerous. Again as a 
non-exhaustive list we reference
\cite{So00,BO00,DV03, Ly03,ST03,BDF10,Bo11,KT12,LMR15,PS11,BQ17,KS19}.
Here the works \cite{KT12,LMR15} place an emphasis on properties of determinantal
point processes of particular relevance to machine learning and statistical inference.
These are thus outside of the earlier applications in mathematical physics. 
Distinguishing the applications in mathematical physics is what in statistical mechanics
is referred to as the thermodynamic limit -- this is when the number of particles and
system volume go to infinity simultaneously such that the average density is a constant.
Among determinantal point processes is a special structure when the correlation kernel
has a difference property $ K^{(N, \Omega)}(x_j, x_l) =  K^{(N, \Omega)}(x_j -x_l,0)$.
Suppose in addition that the correlation kernel is periodic of period $L$, where $L$ is the length of the
interval specifying $\Omega$. Then it turns out that a formulation in what in statistical
mechanics terminology is called the grand canonical ensemble --- termed
$L$-ensembles in the theory of determinantal point processes --- allows for an analytic
treatment in which the thermodynamic limit is readily computed. 
This working leads to the functional  form for the limiting correlation kernel
\begin{equation}\label{17.4c'} 
  K^{(\infty)}( X,Y)   =
  {z } \int_{-\infty}^\infty {e^{2 \pi i (Y-X) s} \lambda^{(\infty)}(s)  \over 1 +  z  \lambda^{(\infty)}(s)} \, ds
 \end{equation}
 obtained in (\ref{17.4c}) below. Here $z$ is the fugacity and $\lambda^{(\infty)}(s) \ge 0$ is
 determined by the functional form of the entries of $L^{(M)}$.
The mechanism underlying this calculation 
 is that the matrix determining the correlation kernel is circulant.  A development
of the consequences of a circulant  structure from a viewpoint in statistical mechanics
 is the explicit theme of the
present work.

From a technical perspective, this undertaking is straightforward: existing formulas \cite{BO00} suffice for the
general cases, and moreover special cases of the necessary working can already be found
in the literature \cite{Fo93a}. On the other hand, no one work logically develops circulant
$L$-ensembles, nor does any single work isolate physically motivated examples from this
viewpoint. Filling this gap in the literature is the contribution of the present work. In section 2 we begin by revising the formalism of $L$-ensembles, and in particular the formula for the correlation kernel in terms of a particular matrix $L^{(M)}$. When this matrix is circulant, the correlation kernel can be computed explicitly, and evaluated in certain limits.
First considered is a continuum limit when the number of lattice sites goes to infinity but the interval stays fixed --- this is equivalent to a grand canonical formalism defined
to begin on the interval; subsection 2.5 gives a direct approach in this setting. However this does not correspond to a thermodynamic limit as the expected total number of
particles is still finite. True thermodynamic limits are considered next, starting with 
a limit when the lattice spacing stays fixed with the number of lattice sites going to infinity, which then is an example of a lattice gas.
Taking the lattice spacing to zero then specifies a statistical state defined on the whole real line. These are considered in subsection 2.4 in the case of $L^{(M)}$ being real symmetric, and in subsection 2.5 when this matrix is complex Hermitian. 

Section 3 begins by showing that $L$-ensembles in the thermodynamic limit of the type considered in section 2 all obey the compressibility sum rule from the theory of fluids. This has significance in relation to the gap probability --- that is the probability that a prescribed interval is free of eigenvalues, as in this setting it has previously been conjectured that the leading asymptotic form of this probability is equal to the exponential of minus of the pressure. Known asymptotics of Toeplitz operators allow this to be checked for the circulant $L$-ensembles in the thermodynamic limit. In subsection 3.4 the particular example of a real circulant $L$-ensemble with the underlying matrix having entries given by a Gaussian functional form is considered. The function $\lambda^{(\infty)}(s)$ in (\ref{17.4c}) is then also a Gaussian. Upon appropriate identification of parameters, (\ref{17.4c'})
can then be identified with that for free fermions in one dimension at finite temperature,
\begin{equation}\label{K.14}
  		K^{(\infty)}(X,Y) = {1 \over 2 \pi} \int_{-\infty}^\infty {e^{i (Y - X)k} \over
  			e^{\beta (k^2 - \mu)} + 1} \, dk.
  			\end{equation}
Here $\beta$ is the inverse temperature, and $\mu$ is the chemical potential.
In the following subsection, a specific example of the complex Hermitian circulant $L$-ensemble is given which can be identified with a statistical mechanical model introduced by Gaudin for application in random matrix theory. The final subsection considers a higher dimensional analogue of the Gaussian functional form, and the resulting correlation kernel is identified with that for free fermions in $d$-dimensions at finite temperature.

\section{$L$-ensembles}
\subsection{Formalism}\label{F}
Consider the discrete setting specified in the opening paragraph of the
Introduction, using a subset viewpoint of $\mathbf X_N$,
but modified so that the value of $N$ can vary. This is done
by specifying that for each $N=0,\dots,M$
\begin{equation}\label{1.0}
{\rm Pr} \, (\boldsymbol  X_N) = {z^N  \over \Xi^{(\boldsymbol M)}(z)}
\det \Big [ L^{(\boldsymbol M)}(x_j,x_l) \Big ]_{j,l=1}^N,
\end{equation}
where $z > 0$ is a parameter, and $ \Xi^{(\boldsymbol M)}(z)$ the normalisation.
Such probabilities on subsets of $\boldsymbol M$ are referred to as $L$-ensembles.
In statistical mechanics, $ \Xi^{(\boldsymbol M)}(z)$ is referred to as the grand canonical partition function
and $z$ as the fugacity.
From expansion properties of the determinant it is easy to see that with $ L^{(\boldsymbol M)} :=  [L^{(\boldsymbol M)}(j,l)]_{j,l=1}^M$
\begin{equation}\label{1.1}
\Xi^{(\boldsymbol M)}(z) := \sum_{N=0}^{M} z^N  \sum_{\boldsymbol X_N \subset \boldsymbol M}
\det \Big [ L^{(\boldsymbol M)}(x_j,x_l) \Big ]_{j,l=1}^N =
\det \Big ( \mathbb I_M + z L^{(\boldsymbol M)} \Big).
\end{equation}
In this setting the $k$-point correlation function is defined by
\begin{equation}\label{1.2}
\rho_{(k)}^{(\boldsymbol M)}(\boldsymbol  X_k)  = 
\sum_{s=0}^{M-k} \sum_{\boldsymbol Y_s \subset \boldsymbol M \atop
|\boldsymbol Y_s | = s} {\rm Pr} \, \Big (\boldsymbol  X_k \cup \boldsymbol  Y_s \Big ).
\end{equation}

All $L$-ensembles are in fact determinantal point processes.

\begin{proposition} (Macchi \cite{Ma75})
For an $M \times M$ matrix $J$, let $(J)_{\boldsymbol  X_k}$ denote the
$k \times k$ submatrix formed from the entries in the rows and columns
labelled by $\boldsymbol  X_k$.  With this notation, we have
\begin{equation}\label{1.3}
\rho_{(k)}^{(\boldsymbol M)}(\boldsymbol  X_k)  =  \det (K^{(\boldsymbol M)})_{\boldsymbol  X_k},
\end{equation}
where 
\begin{equation}\label{1.4}
K^{(\boldsymbol M)} =  z L^{(\boldsymbol M)} (\mathbb I_M + z L^{(\boldsymbol M)})^{-1}.
\end{equation}
\end{proposition}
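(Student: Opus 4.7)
Substituting the $L$-ensemble formula (\ref{1.0}) into the definition (\ref{1.2}) of the correlation function and writing $T = \boldsymbol X_k$, one recognises $\rho_{(k)}^{(\boldsymbol M)}(T)$ as the inclusion probability
\[
\rho_{(k)}^{(\boldsymbol M)}(T) \;=\; \frac{1}{\Xi^{(\boldsymbol M)}(z)}\sum_{S \supset T} z^{|S|}\det L^{(\boldsymbol M)}_S,
\]
where the sum runs over subsets $S$ of $\boldsymbol M$ containing $T$. Since $\Xi^{(\boldsymbol M)}(z) = \det(\mathbb I_M + zL^{(\boldsymbol M)})$ by (\ref{1.1}), the claim (\ref{1.3})--(\ref{1.4}) reduces to the matrix identity
\[
\sum_{S \supset T}z^{|S|}\det L^{(\boldsymbol M)}_S \;=\; \det(\mathbb I_M + zL^{(\boldsymbol M)})\cdot \det(K^{(\boldsymbol M)})_T.
\]

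My strategy is to evaluate the determinant of a single auxiliary matrix in two ways. Ordering the basis so that $T$ appears first and writing $L^{(\boldsymbol M)}$ in the corresponding block form with blocks $A = L_{TT}$, $B = L_{TT^c}$, $C = L_{T^cT}$, $D = L_{T^cT^c}$, I introduce
\[
N := zL^{(\boldsymbol M)} + P_{T^c} = \begin{pmatrix} zA & zB \\ zC & \mathbb I + zD\end{pmatrix},
\]
where $P_{T^c}$ is the diagonal projection onto $T^c$. On one hand, expanding $\det N$ by multilinearity in the $T^c$-indexed rows (each of the form $(zL^{(\boldsymbol M)})_i + e_i$) and then performing Laplace expansion along the rows that select the $e_i$ summand---an operation that simultaneously deletes matching rows and columns and therefore contributes only $+1$ signs---yields $\det N = \sum_{S \supset T} z^{|S|}\det L^{(\boldsymbol M)}_S$. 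On the other hand, computing $\det N$ and $\det(\mathbb I_M + zL^{(\boldsymbol M)})$ by Schur complementation with respect to the common $(T^c,T^c)$-block $\mathbb I + zD$, combined with the reformulation $K^{(\boldsymbol M)} = \mathbb I_M - (\mathbb I_M + zL^{(\boldsymbol M)})^{-1}$ (a direct algebraic consequence of (\ref{1.4})) together with the fact that the $(T,T)$-block of $(\mathbb I_M + zL^{(\boldsymbol M)})^{-1}$ is precisely the inverse of that same Schur complement, gives $\det N = \det(\mathbb I_M + zL^{(\boldsymbol M)})\cdot \det(K^{(\boldsymbol M)})_T$. Comparing the two expressions furnishes the required identity.

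The main obstacle is technical rather than conceptual: careful sign tracking in the multilinear-plus-Laplace expansion, and justification of the Schur-complement step, which nominally requires $\mathbb I + zD$ to be invertible. Invertibility is automatic when $L^{(\boldsymbol M)}$ is positive semidefinite (the usual setting for $L$-ensembles, since then $D$ is too), and can in any event be bypassed by noting that the target identity is polynomial in $z$ and the entries of $L^{(\boldsymbol M)}$, so it extends from the dense invertible locus by continuity.
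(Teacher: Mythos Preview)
Your proof is correct and follows essentially the same approach as the paper: both introduce the auxiliary matrix $N = zL^{(\boldsymbol M)} + P_{T^c}$ (which the paper writes as $\bar{\mathbb I}_M^{\boldsymbol X_k} + zL^{(\boldsymbol M)}$), identify $\det N$ with the inclusion sum $\sum_{S\supset T} z^{|S|}\det L_S$ via the same multilinear/Laplace expansion, and then relate $\det N/\det(\mathbb I_M+zL^{(\boldsymbol M)})$ to $\det(K^{(\boldsymbol M)})_T$.

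The only difference is in this last algebraic step. The paper factors directly, writing $\det N/\det(\mathbb I_M+zL^{(\boldsymbol M)}) = \det\bigl(\mathbb I_M - \mathbb I_M^{\boldsymbol X_k}(\mathbb I_M+zL^{(\boldsymbol M)})^{-1}\bigr)$, rewrites this as $\det(\bar{\mathbb I}_M^{\boldsymbol X_k} + \mathbb I_M^{\boldsymbol X_k}K^{(\boldsymbol M)})$, and then expands along the $e_i$ rows to extract $\det(K^{(\boldsymbol M)})_T$. You instead use Schur complementation with respect to the $(T^c,T^c)$-block together with the block-inverse formula for $((\mathbb I_M+zL^{(\boldsymbol M)})^{-1})_{TT}$. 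These are equivalent computations; your route is perhaps slightly more explicit about where invertibility is used (and your polynomial-continuity remark cleanly handles the degenerate case), while the paper's route avoids block decompositions altogether.
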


\begin{proof}
The sum in (\ref{1.2}) permits an evaluation analogous to (\ref{1.1}), implying
\begin{equation}\label{1.5}
\rho^{(\boldsymbol M)}_{(k)}(\boldsymbol  X_k)  =  {1 \over  \Xi^{(\boldsymbol M)}(z)} \det \Big ( \bar{\mathbb I}_M^{\boldsymbol  X_k} + z  L^{(\boldsymbol M)} \Big ),
\end{equation}
where $ \bar{\mathbb I}_M^{\boldsymbol  X_k} $ denotes the $M \times M$ identity matrix with diagonal entries $\boldsymbol  X_k$ each replaced by zero.
Writing $ \bar{\mathbb I}_M^{\boldsymbol  X_k} =  \mathbb I_M -   \mathbb I_M^{\boldsymbol  X_k}$, substituting the determinant formula
(\ref{1.1}) for $  \Xi^{(\boldsymbol M)}(z)$ and
using the multiplicative property of the determinant shows
\begin{equation}\label{1.5a}
\rho^{(\boldsymbol M)}_{(k)}(\boldsymbol  X_k)   = \det \Big (  \mathbb I_M -  {\mathbb I}_M^{\boldsymbol  X_k}  (\mathbb I_M + z L^{(\boldsymbol M)}  )^{-1} \Big ).
\end{equation}
Introducing  $K^{(M)}$ as defined by (\ref{1.4}), simple manipulation of (\ref{1.5a}) then shows
\begin{equation}\label{1.5b}
\rho^{(\boldsymbol M)}_{(k)}(\boldsymbol  X_k)  
 =  \det \Big (  \bar{\mathbb I}_M^{\boldsymbol  X_k}  +  {\mathbb I}_M^{\boldsymbol  X_k}  K^{(\boldsymbol M)} \Big ).
\end{equation}
 The result (\ref{1.3}), (\ref{1.4}) now follows by expansion
properties of the determinant.
\end{proof}

\subsection{Toeplitz $L$-ensembles}
For (\ref{1.0}) to be well defined, we must have 
\begin{equation}\label{15.0}
\det \Big [ L^{(\boldsymbol M)}(x_j,x_l) \Big ]_{j,l=1}^N \ge 0.
\end{equation}
Requiring too that $ L^{(\boldsymbol M)}$ as specified in (\ref{1.1}) be Hermitian, we know from linear algebra that
this is equivalent to $ L^{(\boldsymbol M)}$ being semi-positive definite, and thus for any $\mathbf c = (c_1,\dots, c_M)$,
that
\begin{equation}\label{15.1}
\mathbf c   L^{(\boldsymbol M)} \mathbf c^\dagger \ge 0.
\end{equation}

Let us investigate (\ref{15.1}) in the setting that $L^{(\boldsymbol M)} (j, l)$ has the difference property $L^{(\boldsymbol M)} (j, l) = L^{(\boldsymbol M)} (j-l,0)$ so that the
configurations $\boldsymbol  X_N$ (thinking now in the particle picture) specified by (\ref{1.0}) are all translationally invariant.
The difference property is equivalent to saying that $L^{(\boldsymbol M)}$ is a Toeplitz matrix. Introduce $f(\zeta)$ as a generating function for the independent entries of $L^{(\boldsymbol M)}$, so that
\begin{equation}\label{15.2}
f(\zeta) = \sum_{j= 0}^{M-1}  L^{(\boldsymbol M)}(j,0) \zeta^j, \qquad    L^{(\boldsymbol M)}(p,0) = \int_0^1 f(e^{2 \pi i x}) e^{-2 \pi  i p x} \, dx.
\end{equation}
Using the latter formula herein in (\ref{15.1}) shows
\begin{equation}\label{15.3}
\mathbf c   L^{(\boldsymbol M)} \mathbf c^\dagger  = \int_0^1 f(e^{2 \pi i x})  | C(e^{2 \pi i x}) |^2 \, dx, \qquad C(e^{2 \pi i x}) = \sum_{j=0}^{M-1} c_j  e^{2 \pi i j x}.
\end{equation}
It follows that a sufficient condition for $L^{(\boldsymbol M)}$ to satisfy (\ref{15.1}) is that $f(e^{2 \pi i x})$ be real  --- which is guaranteed by
the assumption that  $L^{(\boldsymbol M)}$ be Hermitian, and thus $L^{(\boldsymbol M)} (l, j) = \overline{L^{(\boldsymbol M)} (j, l)}$ --- and 
furthermore that $f(e^{2 \pi i x})$ be non-negative.

\begin{remark}
1.~It is not a necessary condition that $ L^{(\boldsymbol M)}$ be Hermitian for (\ref{15.0}) to hold true.
See \cite{Br05} for examples of  tridiagonal Toeplitz matrices of this type. \\
2.~For $\kappa := (\kappa_1,\dots, \kappa_n)$ a partition of non-negative integers, the Schur polynomial
is defined by
\begin{equation}\label{15.4}
s_\kappa(z_1,\dots,z_n) = { \det [ z_j^{\kappa_l + n - l} ]_{j,l=1}^n  \over  \det [ z_j^{ n - l} ]_{j,l=1}^n }.
\end{equation}
This definition extends to general tuples $\kappa$. Standard manipulations used in random matrix theory
(see e.g.~\cite[Exercises 5.4 q.1]{Fo10}), starting with the formula for $L^{(\boldsymbol M)}(p,0)$ in (\ref{15.2})
substituted in the LHS of (\ref{15.0}) show
\begin{multline}\label{15.5}
{1 \over N!} \det \Big [ L^{(\boldsymbol M)}(x_j,x_l) \Big ]_{j,l=1}^N = \int_0^1 dt_1 \, f(e^{2 \pi i t_1}) \cdots \int_0^1 dt_N \,
 f(e^{2 \pi i t_N}) 
|  s_\kappa(e^{2 \pi i t_1},\dots, e^{2 \pi i t_N}) |^2  \\ \times \prod_{1 \le j < l \le N} | e^{2 \pi i t_l} - e^{2 \pi i t_j} |^2,
\end{multline}
where $\kappa = (x_1,\dots,x_N)$.
Given that $ f(e^{2 \pi i t})$ is non-negative, this gives another way of seeing that $ \det \Big [ L^{(\boldsymbol M)}(x_j,x_l) \Big ]_{j,l=1}^N$
is non-negative. \\
3.~The Toeplitz matrix $[L^{(\boldsymbol  M)}(j-l,0)]_{j,l=1}^M$ is {\it not} a random matrix. Rather in the $L$-ensemble formalism beginning with
(\ref{1.0}) the matrix elements $L^{(M)}(x,y)$ are a prescribed functional form. For works which do address random
Toeplitz matrices, as a non-exhaustive list we draw attention to \cite{DGK09,MSS17,Bo18,MS19,Bo20}.

\end{remark}

\subsection{Real circulant $L$-ensembles}
Suppose that in addition to the difference property, the entries of $L^{(\boldsymbol M)}$ are periodic of period $M$,
which is characteristic of periodic boundary conditions, or equivalently the original interval $[0,L]$
being identified with the circumference of a circle.
A systematic way for the entries of $L^{(\boldsymbol M)}$ to have this periodicity, and to have a well
defined scaling limit for large $M, L$, is to measure
distance  as chord length, so that
\begin{equation}\label{15.6}
 L^{(\boldsymbol M)}(x_j,x_l) =  L^{(\boldsymbol M)}(x_j - x_l,0) = g\Big ((L/\pi) \sin(\pi (x_j - x_l)/M) \Big ), 
 \end{equation}
 for some $g(u)$ even and real valued. Note that this latter condition implies 
 that the matrix $[L^{(\boldsymbol  M)}(j-l,0)]_{j,l=1}^M$ is real symmetric. With both the
 difference and periodicity property of the elements, the matrix $ L^{(\boldsymbol M)}$ is referred to as being circulant.
 The significance of this extra structure is that the normalised eigenvectors of all circulant matrices are
 independent of $g$ in (\ref{15.6}) and given by
 \begin{equation}\label{15.7}
 {1 \over \sqrt{M}}  e^{-2 \pi i p_0/M} (1, e^{2 \pi i p/M},  e^{4 \pi i p/M},\dots, e^{2 \pi i (M - 1)/M})^T, 
 \quad p=0,\dots,M-1,
  \end{equation}
  for any integer $p_0$, which we take as equal to 
  $ \lfloor - M/2 \rfloor + 1$.
  As a consequence, the normalisation $\Xi^{(M)}(z)$ and matrix elements $K^{(M)}(x,y)$ can be computed explicitly.
  
\begin{proposition}\label{p2.3}
In the setting of (\ref{15.6}) we have
 \begin{equation}\label{15.8}  
 \Xi^{(\boldsymbol M)}(z) = \prod_{p= \lfloor - M/2 \rfloor +1}^{\lfloor M/2 \rfloor}(1 + z \lambda_p), \quad \lambda_p = \sum_{s= \lfloor - M/2 \rfloor +1}^{ \lfloor M/2 \rfloor} g(L
 \sin(\pi s/M) ) e^{2 \pi i p s /M} =
 \tilde{f}(e^{2 \pi i p/M}),
  \end{equation}
 where $ \tilde{f}(z)$ is the particular generating function for the entries (\ref{15.6})
  \begin{equation}\label{15.6a}
  \tilde{f}(z) :=  \sum_{s= \lfloor - M/2 \rfloor +1}^{ \lfloor M/2 \rfloor} g(L
  \sin(\pi s/M) ) z^s
  \end{equation}
  (cf.~(\ref{15.2})), and
 \begin{equation}\label{15.9}    
 K^{(\boldsymbol M)}(x,y) =  {z \over M} \sum_{p=\lfloor - M/2 \rfloor + 1}^{\lfloor M/2 \rfloor} { e^{2 \pi i (y - x) p /M} \lambda_p \over 1 + z \lambda_p}.
 \end{equation}
 \end{proposition}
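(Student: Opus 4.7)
The plan is to exploit the circulant structure of the matrix $L^{(\boldsymbol M)}$ defined by (\ref{15.6}), combined with the general formulas (\ref{1.1}) and (\ref{1.4}) already proved in Proposition 2.1. The entries depend only on the difference $j-l$, and only modulo $M$, since $g$ is even and $\sin(\pi(s+M)/M) = -\sin(\pi s/M)$ imply $g((L/\pi)\sin(\pi(s+M)/M))=g((L/\pi)\sin(\pi s/M))$; thus $[L^{(\boldsymbol M)}(j-l,0)]_{j,l=1}^M$ is circulant in the strict sense. The fundamental fact I would invoke is that every circulant matrix is simultaneously diagonalised by the discrete Fourier basis (\ref{15.7}).

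Concretely, I would act $L^{(\boldsymbol M)}$ on the $p$-th vector $v_p$ from (\ref{15.7}) and verify $L^{(\boldsymbol M)} v_p = \lambda_p v_p$ with $\lambda_p$ as in (\ref{15.8}). Writing out a generic component $(L^{(\boldsymbol M)} v_p)_k$, substituting the difference property, and changing summation variable to $s = k - j$ produces a sum of $L^{(\boldsymbol M)}(s,0) e^{-2 \pi i s p/M}$ that may be translated to any $M$ consecutive $s$-values because of $M$-periodicity of the summand; I would choose the centred range $\lfloor - M/2 \rfloor + 1, \dots, \lfloor M/2 \rfloor$. The evenness of $L^{(\boldsymbol M)}(s,0)$ in $s$ then lets one replace $e^{-2 \pi i s p/M}$ by $e^{2 \pi i s p/M}$ without changing the total, producing exactly the formula for $\lambda_p$ in (\ref{15.8}).

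With the spectrum in hand, the formula for $\Xi^{(\boldsymbol M)}(z)$ in (\ref{15.8}) is immediate from (\ref{1.1}): the determinant of $\mathbb I_M + z L^{(\boldsymbol M)}$ equals the product of $(1 + z \lambda_p)$ over the $M$ values of $p$. For (\ref{15.9}), I would observe from (\ref{1.4}) that $K^{(\boldsymbol M)}$ is an analytic function of $L^{(\boldsymbol M)}$, hence shares the eigenvectors $v_p$ and has eigenvalues $z \lambda_p/(1 + z \lambda_p)$. Reading off the matrix element via the spectral decomposition gives $K^{(\boldsymbol M)}(x,y) = \sum_p (z \lambda_p/(1 + z \lambda_p)) (v_p)_x \overline{(v_p)_y}$; the conjugate pairing cancels the global phase $e^{-2 \pi i p p_0/M}$ in (\ref{15.7}), and the two $1/\sqrt{M}$ normalisations combine into the $1/M$ appearing in (\ref{15.9}).

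No step presents a genuine obstacle; the argument is a direct application of the spectral theory of circulant matrices. The only care required is the index bookkeeping: the summation range $\lfloor - M/2 \rfloor + 1, \dots, \lfloor M/2 \rfloor$ in both (\ref{15.8}) and (\ref{15.9}) is simply a choice of representatives of $\mathbb Z/M\mathbb Z$, and agreement between the two sums is automatic once one is consistent with this choice on both sides. The role of the evenness of $g$ beyond ensuring circulancy is to render each $\lambda_p$ real-valued and fix the sign convention in the exponent of (\ref{15.8}).
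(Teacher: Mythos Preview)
Your proposal is correct and follows essentially the same route as the paper's proof: both rely on the fact that the circulant matrix $L^{(\boldsymbol M)}$ is diagonalised by the discrete Fourier basis (\ref{15.7}), read off the eigenvalues $\lambda_p$, obtain $\Xi^{(\boldsymbol M)}(z)$ from (\ref{1.1}) as the product $\prod_p(1+z\lambda_p)$, and derive (\ref{15.9}) from the spectral decomposition applied to $K^{(\boldsymbol M)} = zL^{(\boldsymbol M)}(\mathbb I_M + zL^{(\boldsymbol M)})^{-1}$. Your version is in fact more detailed than the paper's---explicitly justifying circulancy via the evenness of $g$, carrying out the index shift to the centred range, and noting the phase cancellation---but the underlying argument is identical.
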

 
 \begin{proof}
 The form of the matrix entries (\ref{15.6}) and the fact that the eigenvectors are given by (\ref{15.7})
 implies the formula for the eigenvalues $\lambda_p$ in (\ref{15.8}), while the determinant formula
 in (\ref{1.1}) implies the first equality therein.
 
 Using the diagonalisation formula
  \begin{equation}\label{15.10}   
   L^{(\boldsymbol M)} =   V^{(\boldsymbol M)} {\rm diag} \, (\lambda_1,\dots, \lambda_N)  (V^{(\boldsymbol M)} )^\dagger,
 \end{equation}   
 where the columns of $V^{(\boldsymbol M)} $ are given by the normalised eigenvectors $\{ \mathbf v_p \}$ of $ L^{(\boldsymbol M)} $,
 we have from (\ref{1.4}) that whenever $L^{(\boldsymbol M)} $ is Hermitian
  \begin{equation}\label{15.11}    
 K^{(\boldsymbol M)}(x,y) = z \sum_{p=0}^{M-1} { (\mathbf v_p)^{(x)}   (\bar{\mathbf v}_p)^{(y)}  \lambda_p \over 1 + z \lambda_p}.
 \end{equation}
 Here $(\mathbf v_p)^{(x)} $ denotes the $x$-th component of $\mathbf v_p$ and $ K^{(\boldsymbol M)}(x,y)$ denotes the
 entry in row $x$, column $y$ of the matrix $ K^{(\boldsymbol M)}$. The fact that for $L^{(\boldsymbol M)} $ a circulant matrix,
 the eigenvectors are given by (\ref{15.7}), implies (\ref{15.9}).
 \end{proof}
 
 \subsection{Large $M$ limits}
 In the first paragraph of the Introduction, the parameter $M$ was introduced as the number of
 equally spaced lattice point used to discretise a line of length $L$. Two large $M$ limits
 are therefore suggested. One is when $M$ and $L$ go simultaneously to infinity and the
 lattice spacing remains fixed. The other is when $M \to \infty$ but $L$ remains fixed, which reclaims
 the point process on the continuum interval $[0,L]$. We will consider each separately.
 
 \subsubsection{$M \to \infty$, $L/M$ fixed}
\begin{proposition}\label{p2.4} 
 Let $\tau := L/M$ denote the lattice spacing,
 and with $g(u)$ as in (\ref{15.6}) let
  \begin{equation}\label{ft1} 
  \tilde{f}^{(\infty)}(\zeta) :=
  \sum_{s=-\infty}^\infty g(\tau s) \zeta^s,
 \end{equation}
   where the sum is assumed to be well defined for $|\zeta|=1$.
 We have
   \begin{equation}\label{16.1}  
 \tau \beta P^{(\tau)}  :=  \lim_{M \to \infty \atop  \tau \: {\rm fixed}} {1 \over M} \log  \Xi^{(\boldsymbol M)}(z) = \int_{-1/2}^{1/2} \log \Big ( 1 + z  \tilde{f}^{(\infty)} (e^{2 \pi i t}) \Big ) \, dt,
 \end{equation}
 where the notation $\tau \beta P^{(\tau)}$ relates to the interpretation of the limit in terms of the pressure from statistical mechanics, and
  \begin{equation}\label{16.2}     
  K^{(\tau)}(x,y)  :=  \lim_{M \to \infty \atop  \tau \: {\rm fixed}}    K^{(\boldsymbol M)}(x,y) = z \int_{-1/2}^{1/2} {e^{2 \pi i (y - x) t}  \tilde{f}^{(\infty)}(e^{2 \pi i t}) \over 1 +
   z  \tilde{f}^{(\infty)}(e^{2 \pi i t}) } \, dt.
 \end{equation}   
\end{proposition}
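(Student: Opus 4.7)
The proposal is to take the explicit finite-$M$ formulas from Proposition \ref{p2.3} and pass to the limit $M\to\infty$ with $\tau=L/M$ fixed in two stages: first show that the circulant eigenvalues $\lambda_p$ converge (uniformly in $p$) to $\tilde f^{(\infty)}(e^{2\pi i p/M})$, and then recognise the prefactor $1/M$ and the sum over $p$ as setting up a Riemann sum over $t\in[-1/2,1/2]$ with $t=p/M$.

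For the first stage, fix $s\in\mathbb Z$ and observe that
\begin{equation}
\frac{L}{\pi}\sin\!\Big(\frac{\pi s}{M}\Big) = \frac{M\tau}{\pi}\sin\!\Big(\frac{\pi s}{M}\Big) \longrightarrow \tau s \quad (M\to\infty),
\end{equation}
so by continuity of $g$ each term in (\ref{15.8}) converges to the corresponding term of $\tilde f^{(\infty)}(e^{2\pi i p/M})$ as defined in (\ref{ft1}). The interchange of limit and summation over $s$ needs justification; the assumption in Proposition \ref{p2.4} that $\tilde f^{(\infty)}(\zeta)$ is well defined on $|\zeta|=1$, interpreted as absolute summability of $\{g(\tau s)\}$ (together with the elementary bound $|(L/\pi)\sin(\pi s/M)|\le \tau|s|$ so that $|g|$ has a common envelope), lets dominated convergence do the job. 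This gives $\lambda_p = \tilde f^{(\infty)}(e^{2\pi i p/M}) + o(1)$ uniformly in $p$.

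For the second stage, write
\begin{equation}
\frac{1}{M}\log\Xi^{(\boldsymbol M)}(z) = \frac{1}{M}\sum_{p=\lfloor -M/2\rfloor+1}^{\lfloor M/2\rfloor} \log\!\big(1+z\lambda_p\big).
\end{equation}
Since $\log(1+z\,\cdot)$ is Lipschitz on compact subsets of $\{w:\mathrm{Re}\,w>-1/z\}$ and $\lambda_p\ge 0$ (by the positivity analysis of Section 2.2 applied to $\tilde f$), the uniform bound from stage one lets me replace $\lambda_p$ by $\tilde f^{(\infty)}(e^{2\pi i p/M})$ up to a vanishing error; what remains is the Riemann sum with nodes $t_p=p/M$ of mesh $1/M$ for the continuous integrand $\log(1+z\tilde f^{(\infty)}(e^{2\pi i t}))$, converging to the integral in (\ref{16.1}). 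The kernel (\ref{15.9}) is treated in exactly the same way: the extra oscillatory factor $e^{2\pi i(y-x)p/M}$ is bounded and, for fixed $x,y$, continuous in $t=p/M$, so the Riemann sum converges to (\ref{16.2}).

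The main obstacle is the uniform-in-$p$ convergence $\lambda_p\to\tilde f^{(\infty)}(e^{2\pi ip/M})$, since this is what is needed to upgrade pointwise convergence into Riemann-sum convergence. If only pointwise convergence of $\tilde f^{(\infty)}$-values is available then one must separately invoke continuity of $\tilde f^{(\infty)}(e^{2\pi it})$ in $t$ — which is guaranteed by absolute summability — to apply the standard Riemann-sum theorem. Apart from this interchange of limits, the argument is a straightforward combination of Proposition \ref{p2.3} with dominated convergence.
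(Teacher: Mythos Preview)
Your approach is essentially the same as the paper's: the paper's proof just observes that $\tilde f(\zeta)\to\tilde f^{(\infty)}(\zeta)$ from the definition (\ref{15.6a}) and then declares that the sums in Proposition \ref{p2.3} are Riemann approximations to the stated integrals. You have simply supplied the analytic scaffolding (termwise convergence via $(L/\pi)\sin(\pi s/M)\to\tau s$, dominated convergence over $s$, uniform control in $p$, Riemann-sum limit over $p$) that the paper leaves implicit.

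One small caveat: your dominated-convergence step asserts that $|(L/\pi)\sin(\pi s/M)|\le\tau|s|$ furnishes ``a common envelope'' for $|g|$. That inequality bounds the \emph{argument} of $g$, not $g$ itself; without some monotonicity of $|g|$ in $|u|$ (which the paper does not state but which holds in its examples such as the Gaussian) the bound goes the wrong way for domination. This is a genuine technical gap in your write-up, though not in the strategy, and the paper's own one-line proof does not address it either.
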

 
 \begin{proof}
 We see from the  definition (\ref{15.6a}) that
  \begin{equation}\label{16.2a}
 \lim_{M \to \infty \atop  \tau \: {\rm fixed}}  
 \tilde{f}(\zeta) =  \tilde{f}^{(\infty)}(\zeta),
 \end{equation}
as specified by (\ref{ft1}).
The stated formulas now   
follow by recognising the appropriate sums in
 Proposition \ref{p2.3}  as Riemann approximations to
 definite integrals. 
 \end{proof}

 As a check, we see from (\ref{1.1}), (\ref{15.8}) and (\ref{15.9}) that
  \begin{equation}\label{16.3}  
  z {\partial \over \partial z }  \lim_{M \to \infty} {1 \over M} \log  \Xi^{(\boldsymbol M)}(z)  = \lim_{M \to \infty} {1 \over M} {\rm Tr} \,   K^{(\boldsymbol M)},
  \end{equation}
 which requires too the validity of interchanging the derivative with the limit on the LHS. Since the RHS is equal to 
 $   \lim_{M \to \infty}    K^{(\boldsymbol M)}(x,x)$ independent of $x$ in the present setting of periodic boundary conditions, the results
 (\ref{16.1}) and (\ref{16.2}) are consistent with (\ref{16.3}). Note too that this tells us that the expected number of particles per
 lattice site in the $M \to \infty$ limit, $\rho^{(\tau)} $ say, is given in terms of the fugacity $z$ by
  \begin{equation}\label{16.4}  
  \rho^{(\tau)} =  z \int_{-1/2}^{1/2} {\tilde{f}^{(\infty)}(e^{2 \pi i t}) \over 1 +
   z  \tilde{f}^{(\infty)}(e^{2 \pi i t}) } \, dt.
  \end{equation} 
  In particular, since $ \tilde{f}^{(\infty)}(z) \ge 0$ for $|z| = 1$,
  the requirement that $0 \le \rho^{(\tau)} \le 1$ is evident.
  
   \subsubsection{$M \to \infty$, $L$ fixed, followed by $L \to \infty$}
   Starting with $L$ fixed we must have that the lattice labels $x,y$ are also suitably
   scaled to correspond to points in the interval, $Lx/M \to X, Ly/M \to Y$,
   and furthermore $z$ must be scaled $z \mapsto L z/M$ to allow for the
   sums in (\ref{0.4}) to correspond to Riemann integrals. 
   The functional form (\ref{15.6}) is again appropriate, but no longer with any
   reference to $M$, so we define
    \begin{equation}\label{15.6a+} 
    L^{(L)}(X,Y) = g \Big ( (L/\pi) \sin(\pi (X -  Y)/L) \Big ).
    \end{equation}

  \begin{proposition}\label{p2.5a} 
Let $g(u)$ be as in (\ref{15.6a+}), and suppose furthermore that it be integrable on $(-L/\pi, L/\pi)$. We have
   \begin{equation}\label{17.1}  
   \log  \Xi^{(L)}(z) :=  \lim_{M \to \infty  \atop L \: {\rm fixed}} \log  \Xi^{(\boldsymbol M)}(Lz/M) = \sum_{p=-\infty}^\infty \log \Big ( 1 + z \lambda_p^{(L)} \Big ), \quad
  \end{equation}
 and
  \begin{equation}\label{17.2}     
  K^{(L)}( X,Y)  :=  \lim_{M \to \infty \atop L \: {\rm fixed}}  (M/L)  K^{(\boldsymbol M)}(M X/L,MY/L) \Big |_{z \mapsto L z/M} = 
  {z \over L} \sum_{p=-\infty}^\infty {e^{2 \pi i (Y-X) p /L} \lambda_p^{(L)}  \over 1 +  z \lambda_p^{(L)} },
 \end{equation}   
 where
 \begin{equation}\label{17.3}   
  \lambda_p^{(L)}  = \lim_{M \to \infty \atop  L \: {\rm fixed}}    {L \over M} \lambda_p = L \int_{-1/2}^{1/2} {g}\Big
  ( {L \over \pi} \sin \pi  t \Big ) e^{2 \pi i p t} \, dt.
  \end{equation}
 \end{proposition}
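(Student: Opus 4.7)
The plan is to substitute the prescribed rescalings into the finite-$M$ formulas of Proposition~\ref{p2.3} and then take $M\to\infty$ termwise, after first identifying the rescaled eigenvalue $(L/M)\lambda_p$ as a Riemann sum. Writing $t_s=s/M$ so the grid $\{t_s\}$ has spacing $1/M$ on $[-1/2,1/2]$, I observe that
\begin{equation*}
\frac{L}{M}\lambda_p \;=\; L\sum_{s=\lfloor -M/2\rfloor+1}^{\lfloor M/2\rfloor} g\bigl(L\sin(\pi t_s)\bigr)\,e^{2\pi i p t_s}\cdot\frac{1}{M},
\end{equation*}
which is a Riemann sum for $L\int_{-1/2}^{1/2}g(L\sin(\pi t))e^{2\pi i p t}\,dt$, converging to $\lambda_p^{(L)}$ as in (\ref{17.3}) under the stated integrability of $g$ on $(-L/\pi,L/\pi)$. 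In this way $\lambda_p^{(L)}$ appears as the $p$-th Fourier coefficient of the even, integrable function $t\mapsto g(L\sin(\pi t))$.

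Feeding this into the product formula of (\ref{15.8}) with $z$ replaced by $Lz/M$ gives
\begin{equation*}
\log\Xi^{(\boldsymbol M)}(Lz/M)\;=\;\sum_{p=\lfloor -M/2\rfloor+1}^{\lfloor M/2\rfloor}\log\bigl(1+z(L/M)\lambda_p\bigr),
\end{equation*}
and each summand converges to $\log(1+z\lambda_p^{(L)})$, formally yielding (\ref{17.1}) once the index range is extended to all integers. For the kernel I substitute $x=MX/L$, $y=MY/L$ in (\ref{15.9}), which turns $e^{2\pi i(y-x)p/M}$ into $e^{2\pi i(Y-X)p/L}$; the prefactor is then $(M/L)\cdot(Lz/M)/M=(z/L)(L/M)$, and after folding the spare $L/M$ into the numerator so that $\lambda_p$ is replaced by the rescaled quantity $(L/M)\lambda_p$, termwise convergence produces (\ref{17.2}).

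The main obstacle is justifying these termwise limits when the sums are over an unbounded index set. For the partition function this amounts to controlling the tails of $\sum_p\log(1+z(L/M)\lambda_p)$ uniformly in $M$; for the kernel one needs absolute summability of the ratios $(L/M)\lambda_p/(1+z(L/M)\lambda_p)$ uniformly in $M$. A Riemann--Lebesgue argument supplies $\lambda_p^{(L)}\to 0$, but promoting this to the required uniform-in-$M$ decay of $(L/M)\lambda_p$ calls for a modicum of smoothness of $g$, such as bounded variation on $[-L/\pi,L/\pi]$, which delivers an $O(1/|p|)$ bound on these discrete Fourier coefficients independent of $M$. With such an estimate in hand, dominated convergence closes both arguments, and the extension of the finite-range sums to $\sum_{p=-\infty}^\infty$ and their interchange with $\lim_{M\to\infty}$ are straightforward.
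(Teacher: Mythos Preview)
Your approach is essentially the same as the paper's: both start from the finite-$M$ formulas (\ref{15.8}) and (\ref{15.9}) of Proposition~\ref{p2.3}, recognise $(L/M)\lambda_p$ as a Riemann sum for the integral in (\ref{17.3}), and then pass to the limit termwise. The paper's proof is in fact considerably terser than yours---it records only the Riemann-sum identification (\ref{17.3c}) and declares the rest to follow---whereas you additionally spell out the substitution in the kernel and flag the need for uniform-in-$M$ tail control to justify the interchange of limit and infinite sum, a point the paper passes over in silence.
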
 
   
\begin{proof}
These formulas follow from 
(\ref{15.8}) and (\ref{15.9}), by viewing (\ref{15.6a}) as a
Riemann sum and thus
 \begin{equation}\label{17.3c} 
\lim_{M \to \infty \atop  L \: {\rm fixed}}
{L \over M} 
 \tilde{f}(e^{2 \pi i p /M}) = L 
\int_{-1/2}^{1/2} {g}\Big
( {L \over \pi} \sin \pi  t \Big ) e^{2 \pi i p t} \, dt,
 \end{equation}
 which corresponds to (\ref{17.3}).

 \end{proof}

After changing variables
$t \mapsto t/L$ in (\ref{17.3}),
 the large $L$ limit of the quantities in Proposition \ref{p2.5a} are almost immediate, with the sums therein recognised
 as Riemann approximations to definite integrals.
 
   \begin{proposition}\label{p2.5b} 
   Let 
  \begin{equation}\label{17.4}   
    \lambda^{(\infty)}(s)  = \int_{-\infty}^\infty {g}(t) e^{2 \pi i s t} \, dt.
 \end{equation}      
 We have 
  \begin{equation}\label{17.4b} 
  \beta P =
\lim_{L \to \infty} {1 \over L}   \log  \Xi^{(L)}(z)  = \int_{-\infty}^\infty \log \Big ( 1 + z \lambda^{(\infty)}(s)  \Big ) \, ds,
  \end{equation}
  and 
  \begin{equation}\label{17.4c}     
  K^{(\infty)}( X,Y)   :=     \lim_{L \to \infty}  K^{(L)}( X,Y) =
  {z } \int_{-\infty}^\infty {e^{2 \pi i (Y-X) s} \lambda^{(\infty)}(s)  \over 1 +  z  \lambda^{(\infty)}(s)} \, ds.
 \end{equation} 
 \end{proposition}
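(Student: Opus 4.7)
The plan is to execute the change-of-variables hint in the text and recognise the sums in Proposition \ref{p2.5a} as Riemann approximations to the stated integrals. First, starting from (\ref{17.3}) and substituting $t = u/L$, I obtain
\[
\lambda_p^{(L)} = \int_{-L/2}^{L/2} g\!\Big(\tfrac{L}{\pi}\sin\tfrac{\pi u}{L}\Big)\, e^{2\pi i (p/L) u}\, du.
\]
Since $(L/\pi)\sin(\pi u/L) \to u$ as $L\to\infty$, while remaining bounded by $|u|$ uniformly in $L$, the integrability of $g$ (invoked implicitly) plus dominated convergence gives $\lambda_p^{(L)} \to \lambda^{(\infty)}(p/L)$, with $\lambda^{(\infty)}$ defined in (\ref{17.4}). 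This also yields the uniform bound $|\lambda_p^{(L)}|\le \|g\|_{L^1}$.

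Next, I would introduce the Riemann spacing $\Delta s := 1/L$ and rewrite the formulas of Proposition \ref{p2.5a} as
\[
\tfrac{1}{L}\log\Xi^{(L)}(z) = \sum_{p\in\mathbb{Z}} \Delta s\;\log\bigl(1 + z\lambda_p^{(L)}\bigr), \qquad K^{(L)}(X,Y) = z\sum_{p\in\mathbb{Z}} \Delta s\;\frac{e^{2\pi i (Y-X) p/L}\,\lambda_p^{(L)}}{1 + z\lambda_p^{(L)}}.
\]
The sampling points $s_p := p/L$ form an increasingly fine grid on $\mathbb{R}$, and by the first step the summand in each expression converges pointwise to the corresponding integrand of (\ref{17.4b}) or (\ref{17.4c}) evaluated at $s = s_p$. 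Formal passage to the limit then produces the claimed integral representations.

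The only genuine obstacle is justifying the interchange of the $L\to\infty$ limit with the infinite sum; every other step is bookkeeping. To handle this I would combine the uniform bound $|\lambda_p^{(L)}|\le \|g\|_{L^1}$, which controls both summands on any compact $s$-range, with the mild assumption that $\lambda^{(\infty)}\in L^1(\mathbb{R})$. Under the latter, $\log(1 + z\lambda^{(\infty)})$ and $\lambda^{(\infty)}/(1+z\lambda^{(\infty)})$ are integrable (since the $\log$ behaves like $z\lambda^{(\infty)}$ for large $|s|$), which yields a summable majorant on the tails of $p$. Dominated convergence for sums then delivers both assertions simultaneously; the kernel formula is in fact slightly easier, since its integrand is automatically integrable once that of (\ref{17.4b}) is. No deeper ingredient beyond Proposition \ref{p2.5a} is needed.
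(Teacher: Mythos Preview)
Your proposal is correct and follows essentially the same approach as the paper: the paper merely states (in the sentence preceding the proposition) that after the change of variables $t \mapsto t/L$ in (\ref{17.3}), the sums in Proposition~\ref{p2.5a} are recognised as Riemann approximations to the stated integrals, treating the passage to the limit as ``almost immediate'' without further justification. Your discussion of dominated convergence to control the tails goes beyond what the paper actually supplies.
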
 

\begin{remark}
	We see from (\ref{16.1}) and (\ref{ft1}) that
	\begin{equation}
	\lim_{\tau \to 0^+} \beta P^{(\tau)}  =
	\int_{-\infty}^\infty \log (1 + z \lambda^{(\infty)}(s)) \, ds,
	\end{equation}
	thus reclaiming (\ref{17.4b}), and from (\ref{16.2}) and (\ref{ft1}) that
	\begin{equation}
	\lim_{\tau \to 0^+}  {1 \over \tau}
	K^{(\tau)}(X/\tau, Y/\tau) \Big |_{z \mapsto \tau z}=
{z } \int_{-\infty}^\infty {e^{2 \pi i (Y-X) s} \lambda^{(\infty)}(s)  \over 1 +  z  \lambda^{(\infty)}(s)} \, ds,
\end{equation} 
thus reclaiming	(\ref{17.4c}). 
Hence, with the functional form (\ref{15.6}) for the
matrix elements of $L^{(\boldsymbol M)}$, the results for the continuum can be reclaimed from the results for the lattice, upon taking the lattice spacing to zero.
\end{remark}

\subsection{Complex circulant $L$-ensembles}
In the interests of brevity, we will restrict attention in this circumstance to the setting of particles on the continuous segment $[0,L]$ in periodic boundary conditions, which can be thought of as the circumference of a circle. Measuring distance as chord length, a functional form giving rise to a complex Hermitian circulant integral operator (no longer a matrix since the domain is a continuum) is
\begin{equation}\label{C.0}
L^{(L)}(X,Y) = i h \Big ( (L/\pi) \sin(\pi(X - Y + 2 i \epsilon)/L)
\Big ), \quad \epsilon > 0,
\end{equation}
with $h(u)$ odd; cf.(\ref{15.6}). Note that this latter requirement implies $\overline{L^{(L)}(X,Y) } = L^{(L)}(Y,X)$ and thus
$L^{(L)}$ is a Hermitian matrix.
The corresponding probability density for a configuration $\boldsymbol X_N$ is
\begin{equation}\label{C.0a}
p(\boldsymbol X_N) = {z^N \over \Xi^{(L)}(z)}
\det \Big [L^{(L)}(X_j,X_k)  \Big ]_{j,k=1}^N
\end{equation}
(cf.~(\ref{1.0})). Here
\begin{equation}\label{C.0b}
\Xi^{(L)}(z) = \det ( \mathbb I + z \mathbb L),
\end{equation}
where $\mathbb L$ is the integral operator on $[0,L]$ with kernel
(\ref{C.0}); the determinant can be specified as the product over the eigenvalues.

Working directly in the continuum setting, the appropriate generalisation of (\ref{1.1}) is
\begin{equation}\label{2.39}
\Xi^{(L)}(z) = \prod_{p=-\infty}^\infty (1 + z \lambda_p),
\end{equation}
while the appropriate generalisation of (\ref{15.11}) is
\begin{equation}\label{2.40}
K^{(L)}(X,Y) = z \sum_{p=-\infty}^\infty {v_p(X) \bar{v}_p(Y) \lambda_p \over 1 + z \lambda_p}.
\end{equation}
Here $\{ \lambda_p \}$ and $\{ v_p(X) \}$ are the eigenvalues
and eigenfunctions of the integral operator $\mathbb L$ on
$[-L/2,L/2]$ (the periodicity in shifts by $L$ of (\ref{C.0}) has been
used to translate the interval) with kernel (\ref{C.0}). Thus
\begin{equation}
\mathbb L[f](X) = i \int_{-L/2}^{L/2}
h \Big ( (L/\pi) \sin(\pi(X - Y + 2i \epsilon)/L)
\Big ) f(Y) \, dY.
\end{equation}
The normalised eigenfunctions of $\mathbb L$ are
\begin{equation}\label{2.39a}
v_p(X) = {1 \over \sqrt{L}} e^{2 \pi i p X/L}, \quad
p \in \mathbb Z,
\end{equation}
which allows for (\ref{2.39}) and (\ref{2.40}) to be made explicit, and moreover for computation of the limit $L \to \infty$.

\begin{proposition}\label{p2.e}
	The formulas (\ref{2.39}) and (\ref{2.40}) hold with $v_p(x)$ given by (\ref{2.39a}) and
	\begin{equation}\label{2.40a}
	\lambda_p = i \int_{-L/2}^{L/2}
h \Big ( (L/\pi) \sin(\pi( - Y + 2i \epsilon)/L)
\Big ) e^{2 \pi i p Y/L} \, dY.
\end{equation}	
Furthermore, with
	\begin{equation}\label{e.1}
\lambda^{(\epsilon, \infty)}(s):= i \int_{-\infty}^{\infty}
h  (  - Y + 2i \epsilon  )
 e^{2 \pi i s Y} \, dY
\end{equation}
replacing $\lambda^{(\infty)}(s)$, the formulas
 (\ref{17.4b}) and (\ref{17.4c}) again hold.
\end{proposition}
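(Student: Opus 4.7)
The plan is to diagonalise the integral operator $\mathbb L$ on $[-L/2,L/2]$ using the fact that its kernel (\ref{C.0}) depends only on $X-Y$ and is $L$-periodic in that variable, so $\mathbb L$ is a convolution operator on the circle and is diagonalised by the Fourier basis (\ref{2.39a}). The $L \to \infty$ formulas are then obtained by a Riemann-sum approximation, in close analogy with the passage from Proposition \ref{p2.3} to Propositions \ref{p2.4} and \ref{p2.5b} in the real symmetric case.

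First, I would verify that $v_p$ is an eigenfunction with eigenvalue given by (\ref{2.40a}): substituting $Y = Y' + X$ in $\mathbb L[v_p](X)$ and then using $L$-periodicity of the kernel to shift the integration back to $[-L/2,L/2]$, one factors out $e^{2\pi i pX/L}/\sqrt{L}=v_p(X)$, leaving the $X$-independent integral (\ref{2.40a}). Hermiticity of $L^{(L)}$, which follows from $h$ being odd as the text points out, ensures orthonormality of $\{v_p\}$ in $L^2[-L/2,L/2]$ and reality of $\{\lambda_p\}$. The Fredholm determinant (\ref{C.0b}) then factorises over eigenvalues to give (\ref{2.39}), and the spectral expansion of $\mathbb K = z\mathbb L(\mathbb I + z\mathbb L)^{-1}$ in the same basis gives (\ref{2.40}).

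For the second assertion I would set $s_p := p/L$ and interpret $\frac{1}{L}\sum_p F(s_p)$ as a Riemann sum with spacing $1/L$. Pointwise, as $L \to \infty$ with $s_p$ held fixed, $(L/\pi)\sin(\pi(-Y+2i\epsilon)/L) \to -Y+2i\epsilon$ uniformly on compacts in $Y$, so (\ref{2.40a}) tends to the right-hand side of (\ref{e.1}), i.e.\ $\lambda_p \to \lambda^{(\epsilon,\infty)}(s_p)$. Substituting into (\ref{2.39}) and (\ref{2.40}), the sums pass to the integrals in (\ref{17.4b}) and (\ref{17.4c}) with $\lambda^{(\epsilon,\infty)}$ in place of $\lambda^{(\infty)}$; the exponent $e^{2\pi i(X-Y)p/L}$ coming from $v_p(X)\bar v_p(Y)$ reproduces $e^{2\pi i(Y-X)s}$ after the harmless relabelling $p \mapsto -p$ in the sum.

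The principal obstacle is the uniform-in-$L$ control needed to justify both the Riemann-sum convergence and the interchange of limit with the infinite sum; this is where the strict positivity of $\epsilon$ enters. Because $h(-Y+2i\epsilon)$ is analytic in a strip about the real axis, contour shifting in (\ref{2.40a}) yields an exponential bound of the form $|\lambda_p| \leq C e^{-c|p|/L}$, uniformly in $L$ for some $c = c(\epsilon) > 0$. This decay is inherited by both $\log(1 + z\lambda_p)$ and by the summand in (\ref{2.40}), making a dominated-convergence argument straightforward. Combining this with the pointwise convergence $\lambda_p \to \lambda^{(\epsilon,\infty)}(s_p)$ then delivers (\ref{17.4b}) and (\ref{17.4c}) in the claimed form.
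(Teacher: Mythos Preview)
Your proposal is correct and matches the paper's intended argument. The paper gives no explicit proof of this proposition; the surrounding text makes clear it is meant as the direct complex-Hermitian analogue of Propositions~\ref{p2.3}--\ref{p2.5b}, with the eigenfunctions (\ref{2.39a}) already identified and the $L\to\infty$ passage left to the reader. Your diagonalisation via translation invariance and periodicity, followed by the Riemann-sum limit, is precisely that route. Your discussion of the uniform-in-$L$ control via analyticity in the strip (enabled by $\epsilon>0$) in fact supplies more analytic detail than the paper provides anywhere, including in the real symmetric case; note however that this step tacitly imposes an analyticity hypothesis on $h$ that the proposition as stated does not make explicit, though it is satisfied in the example $h(u)=1/u$ treated afterwards.
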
	

\section{Properties and examples}

\subsection{Compressibility sum rule}
In this section, in distinction to the subset viewpoint of  a configuration $\boldsymbol X_N$ used in
Section \ref{F}, it is convenient to consider $\boldsymbol X_N$ as an $N$-tuple, not to require
an ordering of the particles such as in (\ref{0.0a}), and to consider the domain as continuous.
The grand canonical ensemble formalism in this setting specifies that the probability density
function for there being $N$ particles in configuration $\boldsymbol X_N$ be given by
	\begin{equation}\label{3.1a+}
p^{(N,\Omega)}(\boldsymbol X_N) = 	{z^N \over \Xi^{(\Omega)}(z)} \det \Big [ L^{(\Omega)}(x_j, x_l) \Big ]_{j,l=1}^N
\end{equation}
where
\begin{equation}\label{3.1b+}
\qquad  \Xi^{(\Omega)}(z) = 1 +
\sum_{N =1}^\infty {z^N  } \int_{\Omega} dx_1 \cdots  \int_{\Omega} dx_N \,  \det \Big [ L^{(\Omega)}(x_j, x_l) \Big ]_{j,l=1}^N ;
\end{equation}
cf.~(\ref{1.0}) and (\ref{1.1}).
The corresponding $k$-point correlation function is given by
	\begin{multline}
	\rho_{(k)}^{(\Omega)}(\boldsymbol  X_k) =
	z^k
\Big (  k! p^{(k,\Omega)}(\boldsymbol X_k ) \\+
\sum_{n=1}^\infty z^n (n+k) \cdots (n+1)
\int_\Omega dy_1 \cdots \int_\Omega dy_n \,
 p^{(k+n,\Omega)}(\boldsymbol X_k \cup  \boldsymbol Y_n )
 \Big ).
 \end{multline}
 Suppose furthermore that the system is translationally invariant. Then
 	\begin{equation}
\int_\Omega dx_2 \,   \rho^{(\Omega)}_{(2)}(x_1,x_2)	=
{1 \over | \Omega |}
\int_\Omega dx_1 \int_\Omega dx_2 \,   \rho_{(2)}^{(\Omega)}(x_1,x_2)
= {1 \over | \Omega |}{1 \over 
 \Xi^{(\Omega)}(z) }
z {\partial^2 \over \partial z^2}
 \Xi^{(\Omega)}(z).
\end{equation}
Simple manipulation, using the fact that for a translationally invariant system
\begin{equation}
	\rho_{(1)}^{(\Omega)}(x) 
	 = {1 \over | \Omega |}
		{\partial \over \partial z} \log \Xi^{(\Omega)}(z),
\end{equation}
shows from this that
\begin{equation}
\int_\Omega dx_2 \,   
\Big ( 	\rho_{(2)}^{(\Omega)}(x_1,x_2) -
\rho_{(1)}^{(\Omega)}(x_1) \rho_{(1)}^{(\Omega)}(x_2) 
+ \delta(x_1 - x_2)  \rho_{(1)}^{(\Omega)}(x_2)
\Big )
= {1 \over | \Omega |}
\Big ( z {\partial \over \partial z} 
\Big )^2
\log \Xi^{(\Omega)}(z).
\end{equation}
Taking the limit $| \Omega | \to \infty$, assuming the limit operation can be taken inside the integral on the LHS, and the derivative operation on the RHS, then gives
\begin{equation}\label{CP}
\int_\Omega dx_2 \,   
\Big ( 	\rho_{(2)}^{(\infty)}(x_1,x_2) -
\rho^{(\infty)}_{(1)}(x_1) \rho_{(1)}^{(\infty)}(x_2) +
\delta(x_1 - x_2)  \rho_{(1)}^{(\infty)}(x_2)
\Big )
=  \Big ( z {\partial \over \partial z} \Big )^2 \beta P. 
\end{equation}
In the theory of fluids, this is referred to as the compressibility sum rule \cite{HM06}.

\begin{proposition}
	The limiting continuous determinantal point process specified by Proposition \ref{p2.5b} satisfies
	(\ref{CP}).
	\end{proposition}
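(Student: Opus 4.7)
The plan is to verify the identity directly by computing both sides for the translationally invariant determinantal kernel $K^{(\infty)}(X,Y)$ from (\ref{17.4c}) and showing they coincide. Since the point process is determinantal with a translationally invariant kernel, we have $\rho_{(1)}^{(\infty)}(x) = K^{(\infty)}(0,0)$ (a constant) and $\rho_{(2)}^{(\infty)}(x_1,x_2) = K^{(\infty)}(0,0)^2 - K^{(\infty)}(x_1,x_2) K^{(\infty)}(x_2,x_1)$. Because $L^{(\boldsymbol M)}$ is Hermitian (and in the real case $g$ is even), the limiting kernel satisfies $K^{(\infty)}(x_2,x_1) = \overline{K^{(\infty)}(x_1,x_2)}$, so the integrand on the LHS of (\ref{CP}) collapses to $-|K^{(\infty)}(x_1,x_2)|^2 + \delta(x_1-x_2) K^{(\infty)}(0,0)$.

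Next I would apply Plancherel's theorem to the Fourier integral representation (\ref{17.4c}): since $K^{(\infty)}(X,Y)$ is the Fourier transform in $Y-X$ of the $L^1$-function $s \mapsto z \lambda^{(\infty)}(s)/(1+z\lambda^{(\infty)}(s))$, one gets
\begin{equation*}
\int_{-\infty}^\infty |K^{(\infty)}(x_1,x_2)|^2 \, dx_2 = z^2 \int_{-\infty}^\infty \left( {\lambda^{(\infty)}(s) \over 1+z\lambda^{(\infty)}(s)} \right)^{\!2} ds.
\end{equation*}
Combined with $K^{(\infty)}(0,0) = z \int_{-\infty}^\infty \lambda^{(\infty)}(s)/(1+z\lambda^{(\infty)}(s)) \, ds$, a common-denominator manipulation reduces the LHS of (\ref{CP}) to $\int_{-\infty}^\infty z\lambda^{(\infty)}(s)/(1+z\lambda^{(\infty)}(s))^2 \, ds$.

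For the RHS, I differentiate (\ref{17.4b}) under the integral sign twice with respect to $z$, using $z\partial_z \log(1+z\lambda) = z\lambda/(1+z\lambda)$ and $z\partial_z \bigl(z\lambda/(1+z\lambda)\bigr) = z\lambda/(1+z\lambda)^2$. This produces exactly the same integral $\int_{-\infty}^\infty z\lambda^{(\infty)}(s)/(1+z\lambda^{(\infty)}(s))^2 \, ds$, establishing (\ref{CP}).

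The main subtleties are not algebraic but analytic: one must justify (i) the Plancherel application, which requires $\lambda^{(\infty)}/(1+z\lambda^{(\infty)}) \in L^2$, automatic once $\lambda^{(\infty)} \in L^1 \cap L^\infty$ (so the integrals in (\ref{17.4b})--(\ref{17.4c}) and in $K^{(\infty)}(0,0)$ are all finite), and (ii) differentiating $\beta P$ under the integral sign in $s$, which follows from dominated convergence since the derivative integrand is bounded by $\lambda^{(\infty)}(s)$. These are the only technical points; I would simply state them as assumptions on $g$ (equivalently on $\lambda^{(\infty)}$), consistent with the hypotheses under which Proposition \ref{p2.5b} itself holds. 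The $\delta$-function term is handled rigorously by interpreting (\ref{CP}) as arising from the $|\Omega|\to\infty$ limit of the finite-$L$ identity and using the fact that $\rho_{(2)}$ has a diagonal singularity of magnitude $\rho_{(1)}$ coming from the discrete-to-continuum transition, which is the standard source of the self-interaction term.
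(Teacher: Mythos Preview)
Your proposal is correct and follows essentially the same route as the paper: both reduce the LHS of (\ref{CP}) to $-\int |K^{(\infty)}(x_1,x_2)|^2\,dx_2$ plus the density term, evaluate this via the Fourier representation (\ref{17.4c}) (your Plancherel step is exactly what the paper means by ``substituting (\ref{17.4c}) shows''), and compare with $(z\partial_z)^2\beta P$ computed from (\ref{17.4b}). Your additional remarks on the analytic justifications (Plancherel, differentiation under the integral) are more explicit than the paper's treatment but do not change the argument.
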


\begin{proof}
	According to (\ref{1.3})
\begin{equation}\label{3.7}
\rho_{(2)}^{(\infty)}(X,Y) -
\rho_{(1)}^{(\infty)}(X) \rho_{(1)}^{(\infty)}(Y) = -
| K^{(\infty)}(X,Y) |^2.
\end{equation}
Substituting (\ref{17.4c})	shows
\begin{equation}\label{CP1}
\int_{-\infty}^\infty 
\Big ( 	\rho_{(2)}^{(\infty)}(X,Y) -
\rho_{(1)}^{(\infty)}(X) \rho_{(1)}^{(\infty)}(Y) \Big )
\,  d Y 
= - z^2 \int_{-\infty}^\infty
\Big ( {\lambda^{(\infty)}(s) \over 1 + z \lambda^{(\infty)}(s)}
\Big )^2 \, ds.
\end{equation}
Substituting (\ref{17.4b}) in the RHS of (\ref{CP}) and
subtracting $\rho_{(1)}^{(\infty)}(x)$ gives this same expression.
\end{proof}	

\begin{remark}
	1.~The same working, with $\lambda^{(\infty)}(s)$ replaced by $\lambda^{(\epsilon,\infty)}(s)$,
	verifies that the 
	continuous determinantal point process specified by Proposition \ref{p2.e} satisfies
	(\ref{CP}). \\
	2.~In the case of an infinite lattice, the integral over $x_2$ should be replaced by a sum over the lattice points, and $\beta P$ on the RHS should be replaced by $\tau \beta P$.
	Upon this modification, the results of Proposition
	\ref{p2.4} can be checked to be consistent.
	\end{remark}
		
\subsection{Gap probabilities}\label{S3.2}
For a point process defined on the real line, let $E^{(\infty)}(n;J)$ denote the probability that there are exactly $n$ particles within an interval $J$.
The case $n=0$ is referred to as the gap probability.
Introducing the generating function
\begin{equation}\label{Sh.1}
\tilde{E}^{(\infty)}(J;\xi) = \sum_{n=0}^\infty (1 - \xi)^n
{E}^{(\infty)}(n;J),
\end{equation}
it is a standard result (see e.g.~\cite[\S 9.1]{Fo10}) that
$\tilde{E}^{(\infty)}(J;\xi)$ can be written in terms of the correlation functions according to
\begin{equation}\label{Sh.1x}
\tilde{E}^{(\infty)}(J;\xi) = 1 + 
\sum_{j=1}^\infty {(-\xi)^j \over j!}
\int_J dx_1 \cdots \int_J dx_j \,
	\rho_{(j)}^{(\infty)}(x_1,\dots,x_j).
	\end{equation}

Specialise now to the case of determinantal correlations
\begin{equation}\label{Sh.2}
	\rho^{(\infty)}_{(j)}(x_1,\dots,x_j) =
	\det \Big [ K^{(\infty)}(X_{j_1}, X_{j_2}) \Big ]_{j_1,j_2=1,\dots,j}.
	\end{equation}
	Let $\mathbb K_J$
	denote the integral operator supported on $J$ with
	kernel $ K^{(\infty)}(X, Y) $.
	Then the summation
	 (\ref{Sh.2}) can be recognised as a key quantity within the Fredholm theory of integral equations
	 \cite{WW65}, namely the Fredholm determinant,
	 \begin{equation}\label{Sh.3}
	 \tilde{E}^{(\infty)}(J;\xi) =
	 \det ( \mathbb I - \xi \mathbb K_J) =
	 \prod_{j=0}^\infty (1 - \xi \lambda_j^{(J)}),
	 \end{equation}
	 where $\{ \lambda_j^{(J)}\}$ denotes the eigenvalues of $\mathbb K_J$; in fact such a quantity has
	 already appeared in (\ref{C.0b}).  Further specialise to the Toeplitz setting so that in  (\ref{Sh.2}) $K^{(\infty)}(X_{j_1}, X_{j_2}) = K^{(\infty)}(X_{j_1} - X_{j_2},0)$, and
	 introduce the Fourier transform
	  \begin{equation}\label{Sh.4}
	  \hat{K}^{(\infty)}(s) = \int_{-\infty}^\infty
	   K^{(\infty)}(x,0) e^{isx} \, dx.
	   \end{equation}
	   The asymptotic theory of Toeplitz integral operators \cite{Ka54} tells us that for $|J| \to \infty$,
	    \begin{equation}\label{Sh.5}
	  \tilde{E}^{(\infty)}(J;\xi) \sim
	  \exp \Big ( |J| \int_{-\infty}^\infty
	  \log (1 - \xi  \hat{K}^{(\infty)}(s)) \, ds
	  \Big ).
	  \end{equation}
	  
	  For the circulant correlation kernels (\ref{17.4c})
	  we have
	  \begin{equation}\label{Sh.6}
	  \hat{K}^{(\infty)}(s) =  {z\lambda^{(\infty)}(s)  \over 1 +  z  \lambda^{(\infty)}(s)}.
	  \end{equation} 
	  Substituting in  (\ref{Sh.5}) and comparing with
	  (\ref{17.4b}) shows, upon setting $\xi = 1$ 
	  and recalling (\ref{Sh.1}) that
	  \begin{equation}\label{Sh.6x}
	  {E}^{(\infty)}(0;J) \sim e^{- | J |  \beta P},
	   \end{equation}
	   in accordance with the functional form expected for the asymptotic gap probability of a general compressible fluid  \cite{FP92,Fo93}. 
	   
	   \subsection{Small separation form of $\rho^{(2,\infty)}(X,Y)$}
	   As a rewrite of (\ref{3.7}) we have
	    \begin{equation}\label{3.7a}
	    \rho_{(2)}^{(\infty)}(X,Y) = K^{(\infty)}(X,X) K^{(\infty)}(Y,Y) - | K^{(\infty)}(X,Y) |^2,
	    \end{equation}
	    telling us in particular that $\rho_{(2)}^{(\infty)}(X,Y)$ vanishes as $X \to Y$.
	    Substituting (\ref{17.4c}) in (\ref{3.7a}) and expanding to leading order in $z$ shows
	    \begin{equation}\label{3.7b}
	    \rho^{(2,\infty)}(X,Y) = 
	    z^2 \bigg ( \Big ( \int_{-\infty}^\infty \lambda^{(\infty)}(s) \, ds \Big )^2-
	    \Big |  \int_{-\infty}^\infty e^{2 \pi i s (X - Y)} \lambda^{(\infty)}(s) \, ds \Big |^2 \Big )
	    \bigg ) + O(z^3).
	    \end{equation}
	    
	    We see that if $\lambda^{(\infty)}(s)$ decays fast enough at infinity to allow the complex exponential to be expanded to second order, the small distance form of 
	    $\rho_{(2)}^{(\infty)}(X,Y)$ will always decay like a quadratic. In fact this will happen without first expanding to low order in $z$.

	   \begin{proposition}
	   	Let $\lambda^{(\infty)}(s)$ and $\lambda^{(\epsilon,\infty)}(s)$ decay at least as fast as order $1/|s|^3$ at infinity. Then
	   	$\rho_{(2)}^{(\infty)}(X,Y)$ goes to zero like $(X-Y)^2$
	   	as $X \to Y$.
	   \end{proposition}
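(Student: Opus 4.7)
The plan is to reduce the question to a Taylor expansion of the correlation kernel near the origin. By the translation invariance built into (\ref{15.6}) and (\ref{C.0}), both $K^{(\infty)}(X,X)$ and $K^{(\infty)}(Y,Y)$ equal a single constant $\rho := K^{(\infty)}(0,0)$, and $K^{(\infty)}(X,Y)$ depends only on $r := Y-X$. Writing $\tilde K(r) := K^{(\infty)}(X, X+r)$ for this function, the identity (\ref{3.7a}) collapses to
\begin{equation*}
\rho_{(2)}^{(\infty)}(X,Y) \;=\; \rho^{2} - |\tilde K(r)|^{2},
\end{equation*}
so the task becomes showing that $|\tilde K(r)|^{2}$ agrees with $\rho^{2}$ to first order and is perturbed by an $O(r^{2})$ correction at $r = 0$.

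Next I would Taylor-expand $\tilde K(r)$ directly from (\ref{17.4c}). With $F(s) := z\lambda^{(\infty)}(s)/(1 + z\lambda^{(\infty)}(s))$, one has $\tilde K(r) = \int_{-\infty}^{\infty} e^{2\pi i s r} F(s)\,ds$; the hypothesis $\lambda^{(\infty)}(s) = O(|s|^{-3})$ transfers to $F$ and is essentially enough to make $s^{k} F(s)$ integrable for $k = 0,1,2$, justifying two differentiations under the integral sign at $r=0$. In the real symmetric case, $g$ even and real forces $\lambda^{(\infty)}$ and $F$ to be even and real, so $\tilde K$ is even and real-valued; the first-derivative term vanishes by parity, leaving
\begin{equation*}
\tilde K(r) = \rho + \tfrac{1}{2}\tilde K''(0)\, r^{2} + o(r^{2}), \qquad |\tilde K(r)|^{2} = \rho^{2} + \rho\,\tilde K''(0)\,r^{2} + o(r^{2}).
\end{equation*}
In the complex Hermitian case (Proposition~\ref{p2.e}) the same representation holds with $\lambda^{(\epsilon,\infty)}$ in place of $\lambda^{(\infty)}$, and Hermiticity of $\mathbb L$ gives $\overline{\tilde K(r)} = \tilde K(-r)$; writing $\tilde K(r) = a(r) + i b(r)$ with $a$ real even and $b$ real odd, one has $a(0) = \rho$, $b(0) = 0$, and the same expansion yields $|\tilde K(r)|^{2} = \rho^{2} + [\rho\,a''(0) + b'(0)^{2}]\, r^{2} + o(r^{2})$.

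Substituting either expansion into the first display produces $\rho_{(2)}^{(\infty)}(X,Y) = C (X-Y)^{2} + o((X-Y)^{2})$ for an explicit non-negative constant $C$ (non-negativity being forced by $\rho_{(2)}^{(\infty)} \ge 0$), which is the claim. The main obstacle is the second-order differentiability step: absolute convergence of $\int s^{2} F(s)\,ds$ sits at the exact boundary of the $|s|^{-3}$ hypothesis. Reading the decay strictly faster than $|s|^{-3}$ (e.g.\ $o(|s|^{-3})$ or $O(|s|^{-3-\delta})$) makes the differentiation clean; at the borderline rate, the sine-squared rewriting $\tilde K(0)-\tilde K(r) = 2z\int \sin^{2}(\pi s r)\,F(s)\,ds$ together with the split of the $s$-integral at $|s| = 1/|r|$ still controls the increment by a quadratic, possibly dressed by a $\log(1/|r|)$ factor, so one obtains the stated quadratic behaviour up to at worst a harmless logarithmic correction.
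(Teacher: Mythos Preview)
Your proposal is correct and follows essentially the same approach as the paper: Taylor-expand the exponential in the integral representation of $K^{(\infty)}$, use parity to kill the linear term in the real symmetric case, and observe in the complex Hermitian case that the (now nonzero) linear term in $\tilde K$ nonetheless cancels when forming $|\tilde K(r)|^2$. Your explicit decomposition $\tilde K = a + ib$ with $a$ even and $b$ odd is a slightly cleaner bookkeeping of that cancellation than the paper's verbal account, and your discussion of the borderline integrability at the exact $|s|^{-3}$ rate is a valid caveat that the paper simply glosses over.
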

	   
	   \begin{proof}
	   	The assumption on the decay of $\lambda^{(\infty)}(s)$ (and its counterpart in the complex Hermitian case) allows the complex exponential in
	   	(\ref{17.4c}) to be expanded to second order in $(X-Y)$.
	   	In the setting of Proposition \ref{p2.5b} the term proportional to $(X-Y)$ vanishes due to the parity of the integrand. Furthermore, substituting in the RHS of (\ref{3.7}) shows that the term independent of $(X-Y)$ also vanishes, leaving the term proportional to $(X-Y)^2$ as the leading term. In the setting of Proposition \ref{p2.e}, applying the same expansion in the analogue of (\ref{17.4c}) we see that the term proportional to $(X-Y)$ no longer vanishes, but nonetheless when substituted in (\ref{3.7}) its contribution to  the expansion of the RHS at this order cancels as does that of the term independent of $(X-Y)$, again leaving  the term proportional to $(X-Y)^2$ as the leading term.
	   \end{proof}
   
   To leading order in $z$, we see from (\ref{3.7b})
that the behaviour of $\rho^{(2,\infty)}(X,Y)$ will be determined by the behaviour of the Fourier transform of
$\lambda^{(\infty)}(s)$. Now from (\ref{17.4})
 \begin{equation}\label{3.7c} 	
 \int_{-\infty}^\infty \lambda^{(\infty)}(s) e^{-2 \pi i s Y} \, ds = g(Y),
 \end{equation}
 so we see that the small distance functional form in
 (\ref{3.7b}) is determined by the rate of vanishing of
 $g(0) - g(Y)$ as $Y \to 0$, which for 
 $\lambda^{(\infty)}(s)$ decaying slower that
 $O(1/s^3)$ will be slower than of order $Y^2$.

\subsection{Example of $g(u)$ a Gaussian}\label{gG}

Suppose we take for the functional form (\ref{15.6}) of the matrix elements
 \begin{equation}\label{gu}
 g(u) = {1 \over \sqrt{c}} e^{-\pi u^2/c}, \quad c>0.
 \end{equation}
Substituting in 
   (\ref{17.4}) shows 
 \begin{equation}\label{gu4}
 \lambda^{(\infty)}(s) = e^{-\pi c s^2}.
  \end{equation}
  This Gaussian functional form in $s$, decaying faster than any power and being analytic, implies upon repeated integration by parts in (\ref{17.4c}) that
   $K^{(\infty)}(X,Y)$ decays faster than any inverse power for $|X-Y|$ large, as will be the case whenever $\lambda^{(\infty)}$ has these properties. Specifically for (\ref{gu4}), this fast decay can be exhibited by first power series expanding in $z$
  and then computing the integrals by completing the square to obtain
  \begin{equation}\label{gu5} 
 K^{(\infty)}(X,Y) = - 
 \sum_{p=1}^\infty {(-z)^{p} \over \sqrt{c p}}
 e^{- \pi (Y - X)^2/cp}, \quad |z| < 1.
  \end{equation}
  Thus in fact there is term-by-term Gaussian decay. 
  Parametrise $c$ and $z$ in terms of $\beta$ and $\mu$
  according to
 \begin{equation}\label{tcp}
 c = 4 \pi \beta, \qquad z = e^{\beta \mu}.
  \end{equation}
  After a simple change of variables in
  		(\ref{17.4c}) the expression (\ref{K.14}) noted in the Introduction
		results, which is the
  		correlation kernel for free
  			fermions (or equivalently hard-core bosons) in one-dimension at inverse temperature $\beta$
  			and chemical potential $\mu$ \cite{Le66}. 
			
Some insight into the relation to free fermions at finite temperature can be obtained
by considering $L^{(L)}(X,Y)$ as specified in terms of $g$ by (\ref{15.6a+}) in the
limit $L \to \infty$ when it reads $L^{(\infty)}(X,Y) = g(X - Y)$. From (\ref{C.0a})
we then have
\begin{equation}\label{U.0}
p(\mathbf X_N) \propto \det [ L^{(\infty)}(X_j, X_k) ]_{j,k=1}^N \propto
\det [ e^{- \pi (X_j - X_k)^2/c} ]_{j,k=1}^N.
\end{equation}	
Now replace $\pi/c$ by $1/(2(1 - q^2))$ and observe that to leading order
\begin{equation}\label{U.2}
\det \Big [ 	e^{ - {1 \over 2} {1 \over 1 - q^2} (x_i - x_j)^2} \Big ]_{i,j=1}^N 	\mathop{\sim}\limits_{q \to 1} 
\det \Big [ 	e^{ - {1 \over 4} {1 + q^2 \over 1 - q^2} (x_i^2 + x_j^2) + {q \over 1 - q^2} x_i x_j}
\Big ]_{i,j=1}^N 	\mathop{\sim}\limits_{q \to 1} 
\end{equation}	
as follows by completing the square on the RHS.
The significance of this is that it is well known that the
RHS  of (\ref{U.2})	 can be rewritten in the form of the probability
density function for $N$ free fermions in a harmonic well on a line, in equilibrium
at a finite temperature; see \cite[\S 1]{LW20} for a clear derivation.
After appropriate scaling, 	the bulk correlation kernel is precisely
(\ref{K.14}) \cite{MNS94,GV03,Jo07,DDMS16}.

			
		It is instructive to consider the $\beta \to \infty$ limit of (\ref{K.14}). We see
  			\begin{equation}\label{K.14a}
  			K^{(\infty)}(X,Y) \Big |_{\beta \to \infty} =
  			{1 \over 2 \pi} \int_{-\sqrt{\mu}}^{\sqrt{\mu}} e^{i (Y - X) k} \, dk =
  				{\sin \sqrt{\mu} (Y - X) \over \pi (Y - X)}.
  				\end{equation}
  				Setting $X = Y$ gives for the particle density the
  				value $\rho_{(1)}^{(\infty)} = \sqrt{\mu}/\pi$. Taking this to equal unity as a
  				normalisation, we then recognise (\ref{K.14a}) as the sine kernel from
  				random matrix theory as applies to the bulk of unitary invariant ensembles
				(see e.g.~\cite[Ch.~7]{Fo10}), the latter having a well known analogy
				with the ground state of free fermions on a line (see e.g.~\cite{DDMS19}).

 In relation to the gap probabilities associated with (\ref{K.14}), from (\ref{Sh.3}) these are determined by the eigenvalues of $\mathbb K_J$. With $J = (-x,x)$, in the present setting these satisfy
 \begin{equation}\label{J.1}
 \int_{-x}^x K^{(\infty)}(X-Y,0) f_j(Y) \, dY = \lambda_j^{(J)} f_j(X),
 \end{equation}
 where $\{ f_j(X)\}$ are the corresponding eigenfunctions.
 Taking the Fourier transform of both sides with respect to $X$, and also writing $f_j(X)$ in terms of its Fourier transform, this can be rewritten
  \begin{equation}\label{J.2}
  \hat{K}^{(\infty)}(k) \int_{-\infty}^\infty
  { \sin (x (k - s)) \over \pi (k - s)}
  \hat{f}_j(s) \, ds = \lambda_j^{(J)} \hat{f}_j(k).
  \end{equation}
  Consider this equation
  with 
  \begin{equation}\label{J.2a}
  \hat{f}_j(k) \mapsto ( \hat{K}^{(\infty)}(k))^{1/2}  \hat{f}_j(k),  
  \end{equation}
  and read off the explicit form of $\hat{K}^{(\infty)}(k)$ from (\ref{K.14}).
   This tells us that we have the identity \cite{IIK90}
   \begin{equation}\label{J.3}
   \prod_{j=0}^\infty (1 - \xi \lambda_j^{(J)})
  = \prod_{j=0}^\infty (1 - \xi \tilde{\lambda}_j^{(J)}),
  \end{equation}
  where $\{ \tilde{\lambda}_j^{(J)} \}$ are the eigenvalues of the integral operator on all of $\mathbb R$ with kernel
  \begin{equation}\label{J.4}
 \tilde{K}^{(\infty)}(k,s) 
= \Big ( {1 \over e^{\beta (k^2 - \mu)} + 1} \Big )^{1/2}
 { \sin (x (k - s)) \over \pi (k - s)}
 \Big ( {1 \over e^{\beta (s^2 - \mu)} + 1} \Big )^{1/2}.
 \end{equation}
 
 The significance of the functional form (\ref{J.4}), in contrast to  (\ref{K.14}), is that the former has the structure of a so-called integrable kernel \cite{IIKS90}.
 Associated with integral kernels are differential equations.
 Explicitly, with (\ref{J.3}) denoted by
 $\Delta(x,\mu,\xi)$, we have that
 $\sigma(x,t,\xi) := \log \Delta(\sqrt{\beta}x,t/\beta,\xi)$ satisfies the partial differential equation  \cite{IIK90}
  \begin{equation}\label{J.5} 
  (\partial_t \partial_x^2 \sigma)^2 = - 4
  (\partial_x^2 \sigma) \Big (
  2 x \partial_t \partial_x \sigma +
  (\partial_t \partial_x \sigma)^2 - 2 \partial_t \sigma
  \Big ),
  \end{equation}
  subject to the small-$x$ expansion
 \begin{equation}\label{J.6}
 \sigma(x,t,\xi) = - {\xi  \over \pi} \Big ( \int_{-\infty}^\infty {d \lambda \over 1 +
 e^{\lambda^2 - t} } \Big )x -
{\xi^2 \over 2 \pi^2}  \Big ( \int_{-\infty}^\infty {d \lambda \over 1 +
	e^{\lambda^2 - t} } \Big )^2 x^2 + \cdots
\end{equation}  

Consider now the gap probability in the $\beta \to \infty$ limit.
We know from (\ref{K.14a}) that 
\begin{equation}
\lim_{\beta \to \infty} (\sqrt{t}/\beta)  \tilde{K}^{(\infty)}(\sqrt{t}k/\beta,\sqrt{t} s/\beta) \Big |_{x \mapsto \sqrt{\beta}x \atop
\mu =  t/\beta}  = {\sin ( \sqrt{t}x (k - s)) \over \pi (k - s)},
\end{equation}
supported on $k,s \in [-1, 1]$. Thus in this limit $\sigma(x,t,\xi)$ depends on $x,t$ only through quantity $\tau := \sqrt{t} x$ and
moreover
\begin{equation}\label{J.5a}
\sigma(x,t,\xi) \to \log \det (\mathbb I - \xi \mathbb K^{(\tau,\rm sine)}_{(-1,1)}) =: \log \Delta(\tau,\xi),
\end{equation}
where $\mathbb K^{(\tau,\rm sine)}_{(-1,1)})$ denotes the integral operator supported on $(-1,1)$ with
kernel (\ref{K.14a}) and $\sqrt{\mu} = \tau$.  As noted in \cite{IIKS90}, with
\begin{equation}
\sigma_0(\tau,\xi) := \tau \partial_\tau \log  \Delta(\tau,\xi)
\end{equation}
it follows from (\ref{J.5}), with prime denoting differentiation with respect to $\tau$, that
\begin{equation}\label{J.5b}
(\tau \sigma_0'')^2 = - 4 (\tau \sigma_0' - \sigma_0) (4 \tau \sigma_0' + (\sigma_0')^2 - 4 \sigma_0).
\end{equation}
This nonlinear second order differential equation, which relates to the Hamiltonian theory of the Painlev\'e V system, was 
obtained for the Fredholm determinant for the sine kernel in (\ref{J.5a}) by the Kyoto school
\cite{JMMS80}; see also \cite[\S 8.3.5]{Fo10}.

\subsection{Example of $h(u) = 1/u$} 
Before specialising Proposition \ref{p2.e}, it is of interest to make note of the evaluation of the determinant in (\ref{C.0}) which holds for this choice of $h$.
This requires use of the Cauchy double alternant identity
(see e.g.~\cite{Ai56})
\begin{equation}\label{CDA}
\det \Big [ {1 \over x_j - y_k} \Big ]_{j,k=1}^N =
{\prod_{1\le j < k \le N}(x_j - x_k) (y_k - y_j) \over
\prod_{j,k=1}^N (x_j - y_k) 
}.
\end{equation}
Noting that with 
\begin{equation}
x_j = {1 \over 2i} e^{2 \pi i X_j/L}
e^{-2 \pi \epsilon/L}, \quad
y_j = {1 \over 2i} e^{2 \pi i X_j/L}
e^{2 \pi \epsilon/L},
\end{equation}
we have
\begin{equation}\label{CDA.1}
\det \Big [
{\pi \over L \sin ( \pi (X_j - X_k + 2 i \epsilon)/L)}
\Big ]_{j,k=1}^N = \Big ( {\pi \over L} \Big )^N \prod_{j=1}^N e^{-2 \pi i N X_j/L}
\det \Big [{1 \over x_j - y_k} \Big ]_{j,k=1}^N,
\end{equation} 
application of (\ref{CDA.1}) gives for $h(u) = i/u$,
\begin{equation}\label{CDA.2}
\det [ L(X_j, X_k)]_{j,k=1}^N =
\Big ({\pi i \over L} \Big )^N  { \prod_{1 \le j < k \le N}
	(\sin (\pi (X_k - X_j)/L))^2 \over
 \prod_{j,k=1}^N \sin(\pi (X_j - X_k + 2 i \epsilon)/L)}.
\end{equation}	
First considered in \cite{Ga66}
(see also \cite{Fo93a}), (\ref{CDA.2}) corresponds to the Boltzmann factor for a statistical mechanical system of $N$ particles (two-dimensional charges interacting via a logarithmic potential) in equilibrium at inverse temperature $\beta = 2$, confined to the interval $[0,L]$ on the $x$-axis in periodic boundary conditions and in the presence of a perfect conductor at $y = \epsilon$. For each charge at $(X,0)$, the  perfect conductor creates an image charge of opposite sign at $(X,2\epsilon)$. We remark that in the reference \cite{Fo93a}, the PDF
corresponding to (\ref{CDA.2}) is related to the  theory of parametric eigenvalue motion due to Pechukas \cite{Pe83} and Yukawa \cite{Yu86}, as discussed
extensively in the book on quantum chaos by Haake \cite{Ha92}.

Returning now to Proposition \ref{p2.e}, setting $h(u) = 1/u$ in (\ref{e.1}) gives
\begin{equation}\label{e.2}
\lambda^{(\epsilon, \infty)}(s):= i \int_{-\infty}^{\infty}
h  (  - Y + 2i \epsilon  ) e^{2 \pi i s Y} \, d Y
= 2 \pi \left \{ \begin{array}{ll} e^{-4 \pi \epsilon s}, \qquad s \ge 0 \\
0, \qquad {\rm otherwise}. \end{array} \right. 
\end{equation}
Substituting this for $\lambda^{(\infty)}(s)$ in
  (\ref{17.4b}) and (\ref{17.4c}) specifies the corresponding pressure and correlation kernel.
Specifically, for the latter \cite{Ga66}
\begin{equation}\label{e.3}
K^{(\epsilon, \infty)}(X,Y) = 
\int_0^\infty {e^{2 \pi i (X - Y) s}
\over (1/2 \pi z) e^{4 \pi  \epsilon s} + 1} \, ds.
\end{equation}	  

From the viewpoint of universal forms for two-point correlations in Coulomb systems of
restricted dimension \cite{FJT96}, of relevance is the large $X$ asymptotic form of (\ref{e.3}),
which is uniform for large $\epsilon$, when furthermore $z$ is related to $\epsilon$ by
\begin{equation}\label{z.1}
(1/2 \pi z) = e^{-4 \epsilon h}, \quad h > 0.
\end{equation}
Integrating by parts once, then extending the domain of integration to all of $\mathbb R$ in the
resulting integral gives
\begin{equation}\label{z.2}
K^{(\epsilon, \infty)}(X,Y)  \mathop{\sim}\limits_{|X - Y| \to \infty} {1 \over 2 \pi i} \bigg ( - {1 \over X - Y} - {e^{2 i h (X - Y)} \over
(2 \epsilon/\pi) \sinh(\pi(X - Y)/2 \epsilon) } \bigg ).
\end{equation}
This substituted in (\ref{3.7}) implies
\begin{equation}\label{z.3}
\rho^{(2,\infty)}(X,Y) -
(\rho^{(1,\infty)})^2  \mathop{\sim}\limits_{|X - Y| \to \infty}^\cdot - {1 \over 4 \pi^2 (X - Y)^2} - {1 \over 8 \epsilon^2 \sinh^2 \pi (X - Y)/2 \epsilon}.
\end{equation}
Here the modification of the asymptotic symbol $  \mathop{\sim}\limits^\cdot$ indicates that oscillatory terms averaging to zero are ignored.
This is the universal form predicted in \cite[Eq.~(3.4)]{FJT96}.

Write now relate $z$ to $\epsilon$ by (\ref{z.1}) and take the limit $\epsilon \to \infty$ with $X,Y$ fixed. We see that
\begin{equation}\label{e.4}
K^{(\epsilon, \infty)}(X,Y) 
 \Big |_{(1/2 \pi z) = e^{-2 \epsilon h/ \pi} \atop \epsilon \to \infty}
= 
e^{i (X - Y) h} {\sin (h (X - Y)) \over \pi (X - Y)}.
\end{equation}
Substituting in the determinant formula (\ref{Sh.1}) we see that the contributions from the factors of the form $e^{i (X - Y) h}$ cancel and as in (\ref{K.14a}) the sine kernel from random matrix theory is reclaimed, as already 
known from \cite{Ga66}.

\begin{remark}
Although not considered further in the present work, we note that the choice $h(u) = 1 / \sinh u$ is also
of interest from the viewpoint of the study \cite{FJT96}. Moreover, as observed in that latter reference,
in the limit $L \to \infty$, an identity analogous to (\ref{CDA.2}) holds true, showing that their is an underlying
pair potential.
\end{remark}

\subsection{Higher dimensions}
The appropriate generalisation of the circulant $L$-ensemble structure to higher dimensions ---
say to a particle system confined to a cube $[0,L]^d$ with periodic boundary conditions --- is to define
vectors $\mathbf x = (x_1,\dots, x_d)$ (and similarly $\mathbf y$) and extend the definition
(\ref{15.6a+})  to read
\begin{equation}\label{kL}
L^{(L)}(\mathbf x, \mathbf y) = g \Big ( (L/\pi) \sin( \pi (x_1 - y_1)/L), \dots,
(L/\pi) \sin( \pi (x_d - y_d)/L) \Big ).
\end{equation}
With this done, the probability density function for a configuration $\mathbf X_N$ in the
cube is again given by (\ref{C.0a}), but with $\mathbb L$ in (\ref{C.0b}) 
now specified on the interval $[-L/2,L/2]^d$ 
with kernel
(\ref{kL}),
\begin{equation}\label{kLa}
\mathbb L[f](\mathbf{x}) = \int_{[-L/2,L/2]^d}
g\Big ((L/\pi) \sin( \pi (x_1 - y_1)/L), \dots,
(L/\pi) \sin( \pi (x_d - y_d)/L) \Big ) f(\mathbf y) \, d\mathbf y.
\end{equation}
The normalised eigenfunctions are
\begin{equation}\label{kLb}
v_{\mathbf p}(\mathbf x) = {1 \over L^{d/2}}  \prod_{j=1}^d e^{2 \pi i p_j x_j /L}
= {1 \over L^{d/2}} e^{2 \pi i  \mathbf p \cdot \mathbf x}
, \quad p_j \in \mathbb Z \: (j=1,\dots,d).
\end{equation}

Using the multidimensional analogues of (\ref{2.39}) and (\ref{2.40}), the results of Proposition
\ref{p2.5b} can be extended to higher dimensions.

\begin{proposition}\label{p2.5c}
Consider the $d$-dimensional determinantal point-process of the $L$-ensemble type
specified by (\ref{kL}) and surrounding text. Let
\begin{equation}\label{kLc}
    \lambda^{(d,\infty)}(\mathbf s)  = \int_{\mathbb R^d} {g}(\mathbf t) e^{2 \pi i \mathbf s \cdot \mathbf t} \, d\textbf t.
 \end{equation}      
 We have 
  \begin{equation}\label{kLd} 
  \beta P =
\int_{\mathbb R^d} \log ( 1 + z \lambda^{(d,\infty)}(\mathbf s)   ) \, d \mathbf s,
  \end{equation}
  and
  \begin{equation}\label{kLe}     
  K^{(d,\infty)}( \mathbf x,\mathbf y)   =   
  {z } \int_{\mathbb R^d}  {e^{2 \pi i (\mathbf y - \mathbf x) \cdot \mathbf s} \lambda^{(d,\infty)}(\mathbf s)  \over 1 +  z  \lambda^{(d,\infty)}(\mathbf s)} \, d\mathbf s.
 \end{equation} 
 \end{proposition}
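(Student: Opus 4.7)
The plan is to follow the one-dimensional argument of Proposition \ref{p2.5b} with the necessary product-structure adaptations in $d$ variables. The key observation is that the kernel (\ref{kL}) only depends on the $d$ coordinate differences $x_j - y_j$ and is periodic of period $L$ in each, so the plane-wave functions $v_{\mathbf{p}}(\mathbf{x})$ in (\ref{kLb}) diagonalise $\mathbb{L}$. Applying $\mathbb{L}$ to $v_{\mathbf{p}}$ using (\ref{kLa}), the change of variables $\mathbf{y}\mapsto \mathbf{x}-\mathbf{y}$ (together with periodicity to restore the integration domain to $[-L/2,L/2]^d$) factors out $e^{2\pi i \mathbf{p}\cdot\mathbf{x}/L}$ and identifies the eigenvalues as
\begin{equation}
\lambda_{\mathbf{p}}^{(L)} = \int_{[-L/2,L/2]^d} g\bigl((L/\pi)\sin(\pi y_1/L),\ldots,(L/\pi)\sin(\pi y_d/L)\bigr)\, e^{-2\pi i \mathbf{p}\cdot\mathbf{y}/L}\, d\mathbf{y}.
\end{equation}

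Next, I would invoke the multidimensional analogues of (\ref{2.39}) and (\ref{2.40}), which hold verbatim since $\mathbb{L}$ is self-adjoint with discrete spectrum indexed by $\mathbf{p}\in\mathbb{Z}^d$: namely $\Xi^{(L)}(z)=\prod_{\mathbf{p}\in\mathbb{Z}^d}(1+z\lambda_{\mathbf{p}}^{(L)})$ and
\begin{equation}
K^{(L)}(\mathbf{x},\mathbf{y}) = \frac{z}{L^d}\sum_{\mathbf{p}\in\mathbb{Z}^d}\frac{e^{2\pi i\mathbf{p}\cdot(\mathbf{x}-\mathbf{y})/L}\,\lambda_{\mathbf{p}}^{(L)}}{1+z\lambda_{\mathbf{p}}^{(L)}}.
\end{equation}
Now parametrise $\mathbf{s}=\mathbf{p}/L$, so that the $\mathbf{p}$-sum times the prefactor $L^{-d}$ is a Riemann sum with spacing $1/L$ in each coordinate for an integral over $\mathbb{R}^d$. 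Since $(L/\pi)\sin(\pi y_j/L)\to y_j$ pointwise as $L\to\infty$, for fixed $\mathbf{s}=\mathbf{p}/L$ one has $\lambda_{\mathbf{p}}^{(L)}\to \lambda^{(d,\infty)}(\mathbf{s})$ with $\lambda^{(d,\infty)}$ as defined in (\ref{kLc}). Consequently
\begin{equation}
L^{-d}\log\Xi^{(L)}(z) = L^{-d}\sum_{\mathbf{p}}\log(1+z\lambda_{\mathbf{p}}^{(L)}) \longrightarrow \int_{\mathbb{R}^d}\log(1+z\lambda^{(d,\infty)}(\mathbf{s}))\,d\mathbf{s},
\end{equation}
which is (\ref{kLd}) once one identifies the left-hand side with $\beta P$, and the same Riemann-sum recognition applied to the kernel yields (\ref{kLe}) (the replacement of $\mathbf{x}-\mathbf{y}$ by $\mathbf{y}-\mathbf{x}$ in the exponent is immaterial when $g$ is assumed even coordinate-wise, so $\lambda^{(d,\infty)}$ is even).

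The main technical obstacle is justifying the interchange of limit with summation (for $\log\Xi^{(L)}$) and with integration over $\mathbf{s}$ (for $K^{(L)}$). As in the one-dimensional case this is glossed over in the statement, but in practice it requires a decay/integrability hypothesis on $g$ (or equivalently on $\lambda^{(d,\infty)}$) so that a dominating function independent of $L$ can be constructed; a natural sufficient condition is that $g\in L^1(\mathbb{R}^d)$ with $\lambda^{(d,\infty)}$ in $L^1(\mathbb{R}^d)\cap L^\infty(\mathbb{R}^d)$, in which case standard dominated-convergence arguments apply uniformly on compact sets in $\mathbf{s}$. Modulo this standard analytic point, all the algebraic content of the proposition is a direct tensor-product extension of Propositions \ref{p2.5a}–\ref{p2.5b}.
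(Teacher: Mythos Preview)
Your proposal is correct and follows precisely the route the paper indicates: the paper does not give a separate proof of Proposition~\ref{p2.5c} at all, merely remarking just before the statement that ``using the multidimensional analogues of (\ref{2.39}) and (\ref{2.40}), the results of Proposition \ref{p2.5b} can be extended to higher dimensions.'' Your write-up is a faithful and more detailed execution of exactly that plan --- diagonalising $\mathbb L$ by the plane waves (\ref{kLb}), reading off the eigenvalues, and passing to the $L\to\infty$ limit via Riemann sums --- so there is nothing to contrast.
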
 
 
 As an explicit example, consider the $d$-dimensional generalisation of Gaussian (\ref{gu}),
 \begin{equation}
 g(\mathbf u) = {1 \over c^{d/2}} e^{- \pi  \mathbf u^2/c}, \quad c>0.
 \end{equation} 
Substituting in (\ref{kLc}) gives
 \begin{equation}
    \lambda^{(d,\infty)}(\mathbf s)  =  e^{- \pi c \mathbf s^2}.
    \end{equation}
The formula  (\ref{kLd}) for the pressure, from the spherical symmetry of the integrand,
then simplifies upon the use of  polar coordinates to read
   \begin{equation}\label{kLf}  
   \beta P = |\Omega_d| \int_0^\infty   r^{d-1} \log  ( 1 + z   e^{- \pi c r^2} ) \, d r,
 \end{equation}
 where $ |\Omega_d| $ denotes the surface area of the unit ball in $d$-dimensions,
while the formula (\ref{kLe}) for the correlation kernel reads
  \begin{equation}\label{kLg}
  K^{(d,\infty)}( \mathbf x,\mathbf y)   =  \Big ( {1 \over 2 \pi } \Big )^d \int_{\mathbb R^d} {e^{i(\mathbf y - \mathbf x) \cdot \mathbf k} \over
  e^{\beta( \mathbf k^2 - \mu)} + 1} \, d \mathbf k.
   \end{equation}  
   In (\ref{kLg}) the parameters $\beta$ and $\mu$ have been introduced in favour of $c$ and $z$ as in (\ref{K.14}).
   The resulting expression can be recognised as the correlation kernel for free fermions in $d$-dimensions in equilibrium
   at inverse temperature $\beta$ and chemical potential $\mu$ (see \cite{DDMS16}, where it is furthermore noted
   that the introduction of polar coordinates can be used to reduce (\ref{kLg}) down to a one-dimensional integral involving
   a Bessel function).
   
   A question of interest is the asymptotic form of the probability that there are no particles in a region $\Lambda$ say of
   $\mathbb R^d$ --- what was termed in the one-dimensional case in Section \S \ref{S3.2} as the gap probability,
   but what in higher dimensions is usually referred to as the hole probability. The fact that 
   $K^{(d,\infty)}( \mathbf x,\mathbf y) $ only depends on the differences of the components allows for the determination of the limiting asymptotic form for
   $|\Lambda| \to \infty$ \cite{Wi60}
   \begin{equation}\label{Laf}
   E(0,\Lambda) \sim e^{-|\Lambda| \beta P},
   \end{equation}
   where $\beta P$ is given by (\ref{kLf}), in keeping with 
   (\ref{Sh.6x}). Generally this asymptotic behaviour of the hole probability is expected whenever the particle system is compressible \cite{FP92}. Note that the latter condition
   ceases to hold in the zero temperature, $\beta \to \infty$, limit of (\ref{kLg}). For results on the corresponding asymptotic form of the hole probability, see the recent work \cite{GDS21}.
   
   Taken literally the complex Hermitian circulant matrix construction (\ref{C.0}) does not have a
   generalisation to higher dimension due to the use of the complex unit $i$ as effectively extending
   from the real line to the $xy$-plane. However, by extending the $L$-ensemble formalism from
   one to two-components,  a two-dimensional complex Hermitian Toeplitz construction
    is known from the work of Gaudin on the
   two-dimensional two-component Coulomb gas at a special coupling \cite{Ga85, Fo98a}.
   
   \subsection*{Afterword}
   My earliest memory of encountering the work of F.~Haake has through his work with Grobe and Sommers on the hole
   probability in the Ginibre ensemble \cite{GHS88}. I was able to use this to deduce the first four terms in
   its asymptotic expansion \cite{Fo92b}. Around the same time the first edition of F.~Haake's celebrated book
   {\it Quantum signatures of chaos} \cite{Ha92} appeared. In addition to be taken by the discussion relating the Ginibre
   ensemble to dissipative quantum systems, I payed particular attention to the sections 
   on the so-called Pechukas--Yukawa gas, and Dyson's Brownian-motion model, in the chapter on Level Dynamics.
   The latter was very influential in shaping my own subsequent work on the topic \cite[Ch.~11]{Fo10}.

\subsection*{Acknowledgements}
	This research is part of the program of study supported
	by the Australian Research Council Centre of Excellence ACEMS, and the project DP210102887.
	The presentation has benefitted from a number of considered remarks put forward by the referees.

\providecommand{\bysame}{\leavevmode\hbox to3em{\hrulefill}\thinspace}
\providecommand{\MR}{\relax\ifhmode\unskip\space\fi MR }
\providecommand{\MRhref}[2]{%
  \href{http://www.ams.org/mathscinet-getitem?mr=#1}{#2}
}
\providecommand{\href}[2]{#2}

\end{document}